\newtheorem{proposition}{Proposition}
\begin{document}
\allowdisplaybreaks
\begin{frontmatter}
\title{The role of the asymmetric Ekman dissipation term on the energetics of the two-layer quasi-geostrophic model}
\author[drlf]{Eleftherios Gkioulekas}
\ead{drlf@hushmail.com}
\address[drlf]{University of Texas Rio Grande Valley, School of Mathematical and Statistical Sciences, 1201 West University Drive, Edinburg, TX 78539-2909}

\begin{abstract}
In the two-layer quasi-geostrophic model, the friction between the flow at the lower layer and the surface boundary layer, placed beneath the lower layer, is modeled by the Ekman term, which is a linear dissipation term with respect to the horizontal velocity at the lower layer. The Ekman term appears in the governing equations asymmetrically; it is placed at the lower layer, but does not appear at the upper layer. A variation, proposed by Phillips and Salmon, uses extrapolation to place the Ekman term between the lower layer and the surface boundary  layer, or at the surface  boundary layer. We present theoretical results that show that in either the standard or the extrapolated configurations, the Ekman term dissipates energy at large scales, but does not dissipate potential enstrophy. It also creates an approximately symmetric stable  distribution of potential enstrophy between the two layers. The behavior of the Ekman term changes fundamentally at small scales. Under the standard formulation, the Ekman term will unconditionally dissipate energy and also dissipate, under very minor conditions, potential enstrophy at small scales. However, under the extrapolated formulation, there exist small ``negative regions'', which are defined over a two-dimensional phase space, capturing the distribution of energy per wavenumber between baroclinic energy and barotropic energy, and the  distribution of potential enstrophy per wavenumber between the upper layer and the lower layer, where the Ekman term may inject energy or potential enstrophy. 
\end{abstract}

\begin{keyword}
two-dimensional turbulence \sep
quasi-geostrophic turbulence \sep
two-layer quasi-geostrophic model \sep
flux inequality
\end{keyword}
\end{frontmatter}


\section{Introduction}

The two-layer quasi-geostrophic model is the most minimal vertical discretization of the quasi-geostrophic model, that captures the basic dynamics of atmospheric turbulence at planetary length scales (i.e. greater than $100$km) under the limits of rapid rotation and small vertical thickness. It consists of two vorticity-streamfunction equations, similar to two-dimensional Navier-Stokes, placed on an upper layer and a lower layer, at $0.25$ Atm and $0.75$ Atm respectively, and a temperature equation placed on a midlayer at $0.5$ Atm. The system is forced thermally, via the temperature equation, and dissipated by small-scale and large-scale dissipation terms placed on the vorticity equations. Using a potential vorticity reformulation, the temperature equation is eliminated and the two vorticity equations are replaced with two potential vorticity equations that are forced via anti-symmetric random forcing (see Appendix A of Ref.~\cite{article:Gkioulekas:p15} for details). Both potential vorticity equations have small-scale dissipation terms that model the dissipativity of the underlying three-dimensional dynamics at small scales. The potential vorticity equation corresponding to the lower layer also has a large scale dissipation term, known as the Ekman dissipation term, that models the dissipation effect resulting from friction of the flow with the surface boundary layer at 1Atm. This term is described as an asymmetric dissipation term because it is placed only on the potential vorticity equation for the lower layer, based on the modeling assumption that only the lower layer entertains friction with the surface boundary layer. 

In many ways, the two-layer quasi-geostrophic model dynamics is similar to that of the two-dimensional Navier-Stokes equations. In both models, the nonlinear interactions conserve energy and potential enstrophy, and, furthermore, the potential enstrophy of each layer is conserved separately. Based on a seminal paper by Charney \cite{article:Charney:1971}, the conventional wisdom has been, for some time, that the turbulence phenomenology of quasi-geostrophic model is isomorphic to that of the two-dimensional Navier-Stokes equations, consisting of an inverse energy cascade towards large scales and a downscale enstrophy cascade towards small scales, as predicted by the Kraichnan-Leith-Batchelor theory \cite{article:Kraichnan:1967:1,article:Leith:1968,article:Batchelor:1969} (hereafter KLB). This viewpoint was challenged in several subsequent papers \cite{article:Welch:2001,article:Orlando:2003,article:Orlando:2003:1,article:Tung:2007,article:Gkioulekas:p15}. The most important difference between the two-layer quasi-geostrophic model and two-dimensional Navier-Stokes is that the former does not retain the tight relationship $D_G (k) = k^2 D_G (k)$ between the energy dissipation rate spectrum $D_E (k)$ and the  enstrophy dissipation rate spectrum $D_G (k)$ \cite{article:Tung:2007}, which plays a fundamental role in establishing the direction of cascades in two-dimensional turbulence \cite{article:Tung:2007:1}. As a result, when the two layers of the two-dimensional quasi-geostrophic model are being dissipated asymmetrically, one cannot rule out the possibility of an observable downscale energy cascade. 

Tung and Orlando \cite{article:Orlando:2003} conducted a numerical simulation using a two-layer quasi-geostrophic model in which they observed coexisting downscale cascades of energy and potential enstrophy that resulted in a mixed energy spectrum exhibiting a transition from $k^{-3}$ scaling to $k^{-5/3}$ scaling with the transition wavenumber $k_t$ situated near the Rossby wavenumber $k_R$. Using dimensional analysis, Tung and Orlando \cite{article:Orlando:2003} argued that the transition wavenumber $k_t$ should depend on the downscale energy flux $\gee$ and the downscale enstrophy flux $\gn$ via the relation $k_t \sim \sqrt{\gn/\gee}$, and they have furthermore verified this relation as well as the downscale direction for both the energy flux and the potential enstrophy flux via simulation diagnostics. 

The phenomenology underlying the coexisting downscale energy and enstrophy cascades over the same inertial range can be understood in terms of a linear superposition principle, derived from the exact structure of the underlying statistical theory, that should hold for both two-dimensional Navier-Stokes turbulence and for the two-layer quasi-geostrophic model \cite{article:Tung:2005,article:Tung:2005:1}. According to this principle, each cascade contributes a power-law term to the energy spectrum $E(k)$, and the two terms combine linearly to give the total energy spectrum. In two-dimensional Navier-Stokes, a flux inequality limits the downscale energy flux severely, causing the contribution of the downscale enstrophy cascade to dominate over the entire downscale inertial range. However, this flux inequality does not necessarily persist in two-layer quasi-geostrophic models under asymmetric dissipation \cite{article:Tung:2007,article:Gkioulekas:p16}, and a violation of the flux inequality would correspond to a downscale energy flux strong enough to result in a broken energy spectrum with an observable transition from $k^{-3}$ scaling to $k^{-5/3}$ scaling with increasing wavenumbers $k$, where the energy cascade term overtakes the enstrophy cascade term after a transition wavenumber $k_t$ situated within the downscale inertial range. 

Tung and Orlando \cite{article:Orlando:2003} theorized that the observed Nastrom--Gage energy spectrum of the atmosphere \cite{article:Gage:1979,article:Nastrom:1986,article:Jasperson:1984,article:Gage:1984} results from coexisting downscale cascades of energy and potential enstrophy, and the point of their work was to demonstrate that such coexisting cascades can manifest even in a model as close to two-dimensional Navier-Stokes turbulence as the two-layer quasi-geostrophic model. Since then, their coexisting cascades theory has been corroborated by measurements and analysis  \cite{article:Zagar:2011} as well as by  numerical simulations of more realistic models  \cite{article:Lindborg:2011,article:Kurien:2011} that have also encountered coexisting downscale cascades. A recent numerical simulation of the two-layer quasigeostrophic model, under periodic boundary conditions and internal forcing, was not able to reproduce a violation of the flux inequality \cite{article:Watanabe:2019}. Due to the use of potential vorticity-based small-scale dissipation,  instead of streamfunction-based dissipation,  we are not able to use our previous results \cite{article:Gkioulekas:p16} to ascertain whether the asymmetry in the dissipation terms between the two layers was sufficiently strong; a question that will be investigated in future work.

Beyond the controversies relating to understanding the Nastrom-Gage spectrum \cite{article:Tung:2006,article:Gkioulekas:p15}, downscale energy cascades in mathematical models, such as the two-layer quasi-geostrophic model, are very intriguing from the point of view of fundamental turbulence research; the model itself is simple enough that its investigation may be possible using techniques that have been successful with two-dimensional Navier-Stokes turbulence \cite{article:Lebedev:1994,article:Lebedev:1994:1,article:Yakhot:1999,article:Procaccia:2002,article:Gkioulekas:2008:1,article:Gkioulekas:p14}. More importantly, there is the open problem of explaining why the downscale energy cascade in three-dimensional Navier-Stokes turbulence has intermittency corrections whereas the inverse energy cascade in two-dimensional Navier-Stokes turbulence follows intermittency-free Kolmogorov scaling \cite{article:Vergassola:2000,article:Tabeling:2002}, where further insight may be gained if one ever studies intermittency in downscale energy cascades manifesting in the two-layer or multi-layer quasi-geostrophic models. 

In order for an observable downscale energy cascade to manifest itself under two-layer quasi-geostrophic turbulence we need the confluence of two requirements: First, the ratio of the rate $\gn$ of injected potential enstrophy over the rate $\gee$ of injected energy from the forcing range and into the downscale inertial range, accounting for any energy and potential enstrophy dissipation at the forcing range itself, needs to place the transition wavenumber $k_t \sim \sqrt{\gn/\gee}$ within the downscale inertial range, in order to have enough downscale energy flux to generate an observable downscale energy cascade. Second, the flux inequality, mentioned previously, should be violated at large wavenumbers in order to ensure that the increased downscale energy flux can be dissipated. Both requirements were investigated rigorously in previous papers \cite{article:Gkioulekas:p15,article:Gkioulekas:p16} and both investigations have been inconclusive, or at best speculative, because they were grounded in rigorous mathematics, avoiding phenomenological assumptions about  two-layer quasi-geostrophic turbulence. 

In Ref.~\cite{article:Gkioulekas:p15} we showed that under random thermal forcing the injection ratio $\gn/\gee$ will place the transition wavenumber $k_t$ near the Rossby wavenumber $k_R$, if all of the injected energy and potential enstrophy cascades towards the small scales. It remains unclear how the Ekman dissipation term modifies this result. We speculated that if the asymmetric Ekman term suppresses random forcing on the lower layer, then the transition wavenumber $k_t$ would be decreased Although the inference itself is rigorous, it is not clear whether that is the effect that the Ekman term really has on the random forcing at the forcing range. 

In Ref.~\cite{article:Gkioulekas:p16}  we have studied the flux inequality under a very wide range of dissipation term configurations. We have shown that it is possible to rigorously prove negative  results that state that if the asymmetry between the dissipation terms, placed at the upper and lower layers,  is less than some upper bound, then the flux inequality will not be violated. Such results can be derived without making any phenomenological assumptions about the behavior of the two-layer quasi-geostrophic model. Unfortunately, the results that we would like to have, namely sufficient conditions on the dissipation asymmetry for violating the flux inequality, cannot be obtained without some knowledge of the underlying phenomenology. We have offered some speculations about some dissipation term configurations facilitating a flux inequality violation more effectively than others, but this question also remains open. 

In this paper we report on some new results towards resolving the first question. Our focus is to study the energy and potential enstrophy dissipation rate spectra $D_E (k)$ and $D_G (k)$ of the asymmetric Ekman term and draw out some phenomenological insight about what it does to the downscale injection rates of energy and potential enstrophy, and how it affects the overall dynamics of the two-layer quasi-geostrophic model. The main breakthrough that allows us to make progress is the following idea: we separate the energy spectrum $E(k)$ into a barotropic energy spectrum $E_K (k)$ and a baroclinic energy spectrum $E_P (k)$, such that $E(k) = E_K (k)+E_P (k)$, and we assume to be given the function $P(k)$ controlling the distribution of energy between barotropic and baroclinic at the wavenumber $k$, such that $E_K (k) = [1-P(k)] E(k)$ and $E_P (k) = P(k) E(k)$. We also separate the potential enstrophy spectrum $G(k)$ into the potential enstrophy spectrum $G_1 (k)$ of the upper layer and the potential enstrophy spectrum $G_2 (k)$ of the lower layer, such that $G(k) = G_1 (k) + G_2 (k)$, and we assume to be given the function $\gC (k)$ controlling the distribution of potential enstrophy between the two layers, such that $G_1 (k) = \gC (k) G(k)$ and $G_2 (k) = [1-\gC (k)] G(k)$. It is then possible to calculate the energy and potential enstrophy dissipation rate spectra $D_E (k)$ and $D_{G_2} (k)$ in terms of $\gC (k)$, $P(k)$, and $E(k)$, and hope that assumptions about $\gC (k)$ and $P(k)$ can tell us something interesting about the Ekman term.  

To account for the function $P(k)$, governing the distribution of energy between baroclinic and barotropic  energy,  we can rely on the phenomenology  that was proposed by Salmon \cite{article:Salmon:1978,article:Salmon:1980}, according to which, energy is initially injected as baroclinic energy at the forcing scale, and cascades downscale to the Rossby  wavenumber $k_R$ length scale, where some of the baroclinic  energy is converted to barotropic  energy. Subsequently, starting at the Rossby  wavenumber $k_R$, most of the resulting barotropic energy cascades upscale,  and is dissipated at large scales by the Ekman dissipation term,  and some cascades downscale, and is dissipated  by the small-scale dissipation term. Furthermore, a  portion of the baroclinic energy that is not converted at the Rossby  wavenumber,  continues to cascade downscale, and is also dissipated by  small-scale dissipation.  At steady state,  all this adds up  to a weak downscale cascade of the total energy.  In the limit $k\ll k_R$, most of the energy is initially baroclinic, since it is injected in that way.  However, after it converts to barotropic  and cascades upscale, the energy distribution becomes predominantly barotropic.  In the limit $k\gg k_R$, there  is a comparable  amount of both barotropic  and baroclinic  energy cascading  in the downscale direction. This implies that, in the limit $k\ll k_R$, we expect that $P(k)$ is initially near $1$ but converges  to a value near $0$, whereas, in the limit $k\gg k_R$, $P(k)$ converges  to a value that is placed near the midpoint between $0$ and $1$.

The asymmetric Ekman term is essential to facilitating the Salmon phenomenology, as it is needed to dissipate the upscale barotropic energy cascade, and, in fact, an early study by Holopainen \cite{article:Holopainen:1961} first hinted that the Ekman term triggers the conversion of energy from barotropic to baroclinic. Salmon justified his predicted phenomenology by a triad interactions argument and confirmed it via an EDM closure model numerical simulation \cite{article:Salmon:1978}  in which symmetric Ekman terms were placed on both layers. In two out of three simulations, the energy spectrum was predominantly barotropic, although increasing the Ekman term coefficient $\nu_E$ resulted in a predominantly baroclinic energy spectrum at the forcing range. Similar results were obtained by Rhines \cite{article:Rhines:1979}, who considered a freely decaying two-layer model,  and were confirmed by an EDQNM closure model simulation \cite{article:Sadourny:1982}, and later on by direct numerical simulation \cite{article:Held:1995}, in which an asymmetric Ekman term was used.  Subsequent work \cite{article:Flierl:2003,article:Flierl:2004,article:Vallis:2014}  has shown that the Salmon phenomenology is altered,  as the coefficient of the asymmetric Ekman term is increased,  in the $k \ll k_R$ limit as follows: there is an intermediate regime in which an increased asymmetric Ekman term inhibits the converted barotropic energy from forming an inverse barotropic energy cascade, which tends to shift the energy distribution towards baroclinic. This regime is then followed with a more extreme regime in which the flow on the lower layer is suppressed and the energy distribution becomes predominantly baroclinic.  

Accounting for the function $\gC (k)$,  governing the distribution of potential enstrophy between the two layers is simpler.  Due to  the layer-by-layer conservation of potential enstrophy, it cannot redistribute itself between the two layers via nonlinear interactions. Consequently, the potential enstrophy distribution $\gC (k)$ is affected solely by the forcing and dissipation terms, as  will be further discussed in the present paper.

Because the details of the results presented in the paper are very technical, we shall now  provide a detailed informal account of the predicted phenomenology in the rest of this introductory section. First of all, we have found that $\gC (k)$ and $P(k)$ are not entirely independent of each other, but are restricted by a mathematically rigorous inequality. In physical terms, the inequality implies that when almost all of the energy is initially injected as baroclinic energy in the limit $k\ll k_R$, i.e. $P(k) \approx 1$, then $\gC (k)$ is restricted to a very tight interval around $1/2$, corresponding to equal distribution of potential enstrophy between the upper layer and the lower layer. As the distribution of energy shifts from baroclinic to barotropic, the restriction on $\gC (k)$ widens allowing a greater percentage of potential enstrophy to concentrate on one layer versus the other. This is relevant because random thermal forcing, which corresponds to anti-symmetric random forcing of the potential vorticity equation \cite{article:Gkioulekas:p15}, injects only baroclinic energy at the forcing range. We may, therefore, expect that throughout the forcing range we initially have $P(k) \approx 1$ and therefore equal distribution of potential enstrophy between the two layers, i.e. $\gC (k) \approx 1/2$ or equivalently $G_1 (k) \approx G_2 (k)$. 

Our first major mathematical result is that the asymmetric Ekman term actually tends to stabilize the approximate equipartition of potential enstrophy that is initially caused by the exclusively baroclinic energy injection. More precisely, we show that for $k\ll k_R$ when $G_1 (k) = G_2 (k)$, the asymmetric Ekman term \emph{removes} potential enstrophy from the lower layer, thereby increasing the ratio $G_1 (k)/G_2 (k)$. Before that ratio has a chance to increase much, the asymmetric Ekman term now becomes injective and adds potential enstrophy to the lower layer, decreasing the ratio $G_1 (k)/G_2 (k)$ at the forcing range. Consequently, at steady state we expect the ratio $G_1 (k)/G_2 (k)$ to settle down on a stable fixed point where no potential enstrophy is being dissipated at the forcing range. The location of the fixed point will vary  as a function of the wavenumber ratio $k/k_R$ but it will maintain an approximate equipartition of potential enstrophy between the two layers for all wavenumbers $k\ll k_R$, with more potential enstrophy concentrated in the upper layer.

Our next major result is that the dynamic behavior of the asymmetric Ekman term changes in the limit $k\gg k_R$, where it becomes exclusively dissipative with respect to potential enstrophy. More precisely, for wavenumbers $k\gg k_R$, the Ekman term will dissipate potential enstrophy only from the lower layer, but not from the upper layer. Furthermore, since potential enstrophy is being conserved separately for each layer by the nonlinear interactions, it cannot be redistributed between layers by the nonlinear interactions. We expect therefore that the ratio $G_1 (k)/G_2 (k)$ will increase with increasing wavenumbers $k$ in the limit $k\gg k_R$, provided that the Ekman term coefficient is sufficiently large to sustain the potential enstrophy dissipation rate spectrum $D_G (k)$ at these wavenumbers. A sufficient increase in the ratio $G_1 (k)/G_2 (k)$ may facilitate the violation of the flux inequality, as was first noted in Ref.~\cite{article:Tung:2007}. However, a stronger Ekman term does not dissipate potential enstrophy for the wavenumbers $k\ll k_R$ where we expect to see the energy spectrum scaling $k^{-3}$ of the downscale potential enstrophy cascade dominate, because of the stable fixed-point partition of potential enstrophy between the upper and lower layer. Consequently, we do not expect the Ekman term to disrupt the $k^{-3}$ part of the broken energy spectrum.

Although the Ekman term's behavior is ambivalent with respect to potential enstrophy, where it may inject or dissipate potential enstrophy, depending on the distribution of potential enstrophy between the two layers, we show that, in its standard form, it is always dissipative with respect to energy for all wavenumbers. As a result, the overall picture is that both potential enstrophy and energy are injected at the forcing range with injection ratio $\gn/\gee \sim k_R^2$. The Ekman term dissipates some of the injected energy but does not dissipate the injected potential enstrophy at the injection wavenumbers, so the resulting downscale fluxes of energy and potential enstrophy shift the transition wavenumber $k_t$ towards small scales, i.e. $k_t > k_R$. This is the opposite of what we would have expected to see from assuming that the Ekman term merely dampens the forcing term at the lower layer \cite{article:Gkioulekas:p15}, indicating that such an assumption is an oversimplification. Furthermore, we see some tension between two opposing tendencies: a strong Ekman term is needed to violate the flux inequality and result in placing the transition wavenumber $k_t$ in the inertial range. On the other hand, when the Ekman term is too strong, it may end up dissipating too much energy at the forcing range, resulting in an insufficient amount of downscale energy flux, thereby pushing the transition wavenumber $k_t$ back into the dissipation range. 

Finally, in this paper we will also consider the behavior of a modified form of the asymmetric Ekman term that we have previously described as \emph{extrapolated Ekman dissipation} \cite{article:Gkioulekas:p16}. The standard formulation of the Ekman term makes it dependent only on the streamfunction of the lower layer.  In the extrapolated formulation, which was initially proposed by Phillips  \cite{article:Phillips:1956} and Salmon \cite{article:Salmon:1980}, the Ekman term is dependent on the streamfunctions of both of the  upper and the lower layer, appearing again only on the potential vorticity equation of the lower layer. The rationale for the extrapolated formulation is that the Ekman term depends on the streamfunction field at the surface boundary layer at 1 Atm, situated below the lower layer, which is typically placed at 0.75 Atm. In the extrapolated formulation, the streamfunction at the surface layer is modeled via linear extrapolation from the streamfunction at the upper and lower layers, whereas in the standard formulation the surface layer streamfunction is set equal to the lower layer streamfunction. Our discussion, so far, detailed what happens when the standard form of the Ekman term is used asymmetrically at the lower layer. So, what changes if we instead use the extrapolated formulation of the Ekman term? 

First, we have found that for wavenumbers $k\ll k_R$ our previous argument regarding the potential enstrophy dissipation continues to hold. The potential enstrophy distribution is stabilized in an approximate equipartition between the upper and lower layers, and as a result no potential enstrophy should be dissipated on average at steady state when $k\ll k_R$. For wavenumbers $k\gg k_R$, the extrapolated Ekman term will dissipate potential enstrophy only from the lower layer, thereby increasing the ratio $G_1 (k)/G_2 (k)$, which we expect to help break the flux inequality at large wavenumbers. As a result, the behavior of the extrapolated Ekman term with respect to potential enstrophy dissipation is not different from that of the standard Ekman term. 

However, there are differences with respect to energy dissipation. For  wavenumbers in the limit $k \ll k_R$, where the flow is expected to be predominantly barotropic, according to Salmon's phenomenology \cite{article:Salmon:1978,article:Salmon:1980,book:Salmon:1998}, the extrapolated Ekman term will dissipate energy, regardless of the distribution of potential enstrophy between the two layers. In the limit of very high Ekman dissipation coefficient $\nu_E$, the flow may shift towards being more baroclinic, and when at least half of the energy in the energy spectrum is baroclinic,  there are negative regions where the extrapolated Ekman dissipation term will inject energy instead of dissipating it. These regions occur when the flow is at least half baroclinic and most potential enstrophy is concentrated in the upper layer.  In the limit $k\gg k_R$, the energy in the energy spectrum is expected to be distributed in comparable amount between barotropic energy and baroclinic energy.  The extrapolated Ekman term will be dissipative if at least one half of the energy in the energy spectrum is baroclinic. When more than half of the energy  is instead barotropic, then there is a negative region where the extrapolated  Ekman term injects energy,  when most of the potential enstrophy  is also concentrated in the upper layer. More details about this strange behavior of the energy dissipation rate spectrum, under the extrapolated Ekman term, is given in the discussion of the negative regions displayed in Fig.~\ref{fig:BE-one} and Fig.~\ref{fig:BE-two}. 

This paper is organized, as explained in the following. Section 2 describes the governing equations of the two-layer quasi-geostrophic model and gives the mathematical definition of the bracket notation, used to define spectra of energy and potential enstrophy as well as the corresponding dissipation rate spectra. Section 3  defines the energy and potential enstrophy spectrum functions $E(k)$, $G_1 (k)$, $G_2 (k)$, the baroclinic energy spectrum $E_P(k)$, the barotropic energy spectrum $E_K(k)$, and the streamfunction spectra $U_1 (k)$, $U_2 (k)$, $C_{12}(k)$, and shows how all of them can be calculated in terms of $E(k)$ and the functions $P(k)$ and $\gC(k)$. We also derive Proposition~\ref{prop:distribution-constraint}, establishing a rigorous mathematical constraint between $P(k)$ and $\gC(k)$. Section 4 writes the energy and potential enstrophy dissipation rate spectra $D_E (k)$, $D_{G_1} (k)$, $D_{G_2}(k)$ in terms of $E(k)$, $P(k)$, and $\gC(k)$. Section 5 studies the predicted phenomenology of the potential enstrophy dissipation rate spectrum $D_{G_2}(k)$ in the limits $k \ll k_R$ and $k \gg k_R$. Section 6 presents a similar study for the energy dissipation rate spectrum $D_E(k)$. The paper concludes with Section 7. Several technical details and proofs are given in \ref{sec:app-Ekman-term}, \ref{sec:sign-of-BG-two}, \ref{sec:sign-of-BG-one}, and \ref{app:fig-two-geometry}. 

\section{Preliminaries}

The potential vorticity formulation of the two-layer quasi-geostrophic model is given by the following two governing equations for the potential vorticity in each layer:
\begin{align}
&\pderiv{q_1}{t}+J(\gy_1, q_1) = d_1 + f_1, \\
&\pderiv{q_2}{t}+J(\gy_2, q_2) = d_2 + f_2.
\end{align}
Here, $q_1,\gy_1$ represent the potential vorticity and the streamfunction at the upper layer placed at $0.25$ Atm; $q_2,\gy_2$ represent the potential vorticity and the streamfunction at the lower layer, placed st $0.75$ Atm; $d_1, d_2$ represent the dissipation terms at the upper and lower layer; $f_1, f_2$ represent the random forcing terms at the upper and lower layer. The nonlinear terms are represented by $J(\gy_1, q_1)$ and $J(\gy_2, q_2)$, where the general definition reads:
\begin{equation}
J(a,b) = \frac{\pd(a,b)}{\pd(x,y)} = \pderiv{a}{x}\pderiv{b}{y}-\pderiv{b}{x}\pderiv{a}{y}.
\end{equation}
The potential vorticities $q_1,q_2$ are related with the streamfunctions $\gy_1,\gy_2$ via
\begin{align}
q_1 &= \del^2\gy_1+f+\frac{k_R}{2}(\gy_2-\gy_1), \\
q_2 &= \del^2\gy_2+f-\frac{k_R}{2}(\gy_2-\gy_1),
\end{align}
with $k_R$ representing the Rossby wavenumber and $f=f_0+\gb y$ (with $f_0, \gb$ constants) representing the Coriolis term. Under the approximation $\gb =0$, $f$ becomes constant and is completely eliminated from the nonlinear terms. As we noted in a previous paper \cite{article:Gkioulekas:p16} this assumption is appropriate for the case of the Earth. The baroclinic instability is accounted for by the random forcing terms $f_1, f_2$, which must be defined antisymmetrically (i.e. $f_1=\varphi$ and $f_2=-\varphi$), under the assumption that all forcing is thermal \cite{article:Gkioulekas:p15}. 

For the dissipation terms $d_1, d_2$ we have previously \cite{article:Gkioulekas:p16} considered a broad range of several possible configurations, all encompassed by the equations:
\begin{align}
d_1 &= \nu (-1)^{p+1}\del^{2p+2}\gy_1, \\
d_2 &= (\nu+\gD\nu)(-1)^{p+1}\del^{2p+2}\gy_2-\nu_E\del^2\gy_s.
\end{align}
Here $\nu$ and $\nu+\gD\nu$ are the hyperviscosity coefficients for the small scale dissipation placed at both layers; $\nu_E$ is the coefficient of the Ekman dissipation term, that appears asymmetrically only on the lower layer; $\gy_s$ is the surface layer streamfunction, with the surface layer positioned anywhere between the lower layer at $0.75$ Atm and the surface boundary layer at $1$ Atm.

The standard choice for the Ekman term is to let $\gy_s =\gy_2$, corresponding to what we shall call \emph{standard Ekman term}. Another possibility \cite{article:Salmon:1980,article:Gkioulekas:p16} is to use linear extrapolation to express $\gy_s$ in terms of $\gy_1,\gy_2$. If $p_1$ is the pressure at the upper layer, $p_2$ is the pressure at the lower layer, and $p_s$ is the pressure at the surface layer, we require that the points with coordinates $(p_s,\gy_s), (p_1,\gy_1),(p_2,\gy_2)$ be collinear, as a means of extrapolating $\gy_s$ from $\gy_1,\gy_2$. It follows that $\gy_s =\gl\gy_2+\mu\gl\gy_1$ with $\gl=(p_s-p_1)/(p_2-p_1)$ and $\mu=(p_2-p_s)/(p_s-p_1)$ and we can furthermore show that $\gl=1/(\mu+1)$ and rewrite the equation for $\gy_s$ as
\begin{equation}
\gy_s = \frac{\mu\gy_1+\gy_2}{\mu+1}.
\end{equation}
For the most general case $0<p_1<p_2\leq p_s$ corresponding to stacking up the upper, lower, and surface layers in the right order, we can show that $-1<\mu\leq 0$. However, if we set $p_1=0.25$ Atm and $p_2=0.75$ Atm and assume that $p_2\leq p_s\leq 1$ Atm, then the range for $\mu$ narrows down to $-1/3\leq\mu\leq 0$. The case $\mu =0$ corresponds to the standard Ekman term, and the case $-1/3\leq\mu <0$ corresponds to the extrapolated Ekman term, with $\mu=-1/3$ corresponding to the extrapolated Ekman term formulation used by Phillips \cite{article:Phillips:1956} where $p_s = 1$. The details are given in \ref{sec:app-Ekman-term}. Understanding the effect of the Ekman term on the phenomenology of the two-layer quasi-geostrophic model is the main focus of this paper. 

The two-layer quasi-geostrophic model conserves energy as well as potential enstrophy in the upper layer and potential enstrophy in the lower layer. In Ref.~\cite{article:Gkioulekas:p15,article:Gkioulekas:p16} we have used the following bracket notation to define the energy spectrum $E(k)$ and the potential enstrophy spectra $G_1(k)$ and $G_2(k)$ for the upper and lower layers. Let $a(\bfx)$ with $\bfz\in\bbR^2$ be some field in $\bbR$ and let $k\in (0,+\infty)$. We define the filtered field $a^{<k}(\bfx)$ via the equation 
\begin{equation}
a^{<k}(\bfx) = \int_{\bbR^2}\df{\bfx_0}\int_{\bbR^2}\df{\bfk_0} \frac{H(k-\nrm{\bfk_0})}{4\pi^2}\exp (i\bfk_0\cdot (\bfx-\bfx_0)) a(\bfx_0),
\end{equation}
with $H(x)$ the Heaviside function defined by
\begin{equation}
H(x) = \casethree{1}{x\in (0,+\infty)}{1/2}{x=0}{0}{x\in (-\infty, 0).}
\end{equation}
This is a low-pass filter where $a^{<k}(\bfx)$ retains only the Fourier modes inside a disk in Fourier space with radius less than $k$, setting all modes outside of the disk equal to zero. Given two fields $a(\bfx)$ and $b(\bfx)$ with $\bfx\in\bbR$, with Fourier transforms $\hat{a}(\bfk)$ and $\hat{b}(\bfk)$ such that 
\begin{align}
a(\bfx) &= \int_{\bbR^2}\hat{a}(\bfk)\exp(i\bfk\cdot\bfx)\;\df{\bfk}, \\
b(\bfx) &= \int_{\bbR^2}\hat{b}(\bfk)\exp(i\bfk\cdot\bfx)\;\df{\bfk},
\end{align}
and given a wavenumber $k\in (0,+\infty)$, we define the bracket $\innerf{a}{b}{k}$ such that 
\begin{equation}
\innerf{a}{b}{k} = \dD{k}\int_{\bbR^2}\df{\bfx}\; \avg{a^{<k}(\bfx) b^{<k}(\bfx) }
= \frac{1}{2}\int_{A\in\SO (2)}\df{\gW (A)}\; k \avg{\hat{a}(kA\bfe) \hat{b}^{\ast}(kA\bfe) + \hat{a}^{\ast}(kA\bfe) \hat{b}(kA\bfe)}.
\label{eq:bracket-definition}
\end{equation}
Here, $\avg{\cdot}$ represents an ensemble average, $\SO{2}$ is the set of all non-reflecting rotation matrices in $\bbR^2$, $\df{\Omega (A)}$ represents the measure of the corresponding spherical integral over all rotations in $\bbR^2$, and $\bfe$ is a two-dimensional unit vector pointing in some arbitrarily chosen direction. The star notation in $\hat{a}^{\ast}$ and $\hat{b}^{\ast}$ represents a complex conjugate. 

It immediately follows that the bracket is symmetric and bilinear in that it satisfies, for all $\gl, \mu\in\bbR$
\begin{align}
&\innerf{a}{b}{k} = \innerf{b}{a}{k}, \\
&\innerf{a}{\gl a+\mu b}{k} = \gl\innerf{a}{b}{k} + \mu\innerf{a}{c}{k}, \\
&\innerf{\gl a + \mu b}{c}{k} = \gl\innerf{a}{c}{k} + \mu\innerf{b}{c}{k}.
\end{align}
We can also show that for any field $a(\bfx)$, the bracket is positive definite:
\begin{equation}
\innerf{a}{a}{k} \geq 0.
\end{equation}
Finally, we can show, as an immediate consequence of Eq.~\eqref{eq:bracket-definition} that
\begin{equation}
\innerf{\lapl a}{b}{k} = \innerf{a}{\lapl b}{k} = -k^2 \innerf{a}{b}{k}.
\end{equation}

\section{Energy and potential enstrophy spectra}
\label{sec:energy-potential-enstrophy-spectra}

The nonlinear terms of the two-layer quasi-geostrophic model conserve the total energy $E$ and the total potential enstrophies $G_1$ and $G_2$ for the upper and lower layers, given by
\begin{align}
E(t) &= -\int_{\bbR^2}\df{\bfx}\; [\gy_1 (\bfx, t) q_1 (\bfx, t) + \gy_2 (\bfx, t) q_2 (\bfx, t)], \\
G_1 (t) &= \int_{\bbR^2}\df{\bfx}\; q_1^2 (\bfx, t), \\
G_2 (t) &= \int_{\bbR^2}\df{\bfx}\; q_2^2 (\bfx, t),
\end{align}
under the assumptions $f_1 = f_2 = 0$ and $d_1 = d_2 = 0$. The distribution of energy and potential enstrophy for each layer in Fourier space is described by the energy spectrum $E(k)$ and the corresponding potential enstrophy spectra $G_1 (k)$ and $G_2 (k)$, that are defined via the bracket notation as 
\begin{align}
E(k) &= -\innerf{\gy_1}{q_1}{k} - \innerf{\gy_2}{q_2}{k}, \\
G_1 (k) &= \innerf{q_1}{q_1}{k}, \\
G_2 (k) &= \innerf{q_2}{q_2}{k}.
\end{align}
All three spectra are positive definite for all wavenumbers $k\in (0, +\infty)$. The minus sign in the definition of $E(k)$ is needed to ensure that $E(k)\geq 0$ \cite{article:Gkioulekas:p16}. In previous work \cite{article:Tung:2007,article:Gkioulekas:p15,article:Gkioulekas:p16} we have also found it useful to define the streamfunction spectra $U_1 (k), U_2 (k), C_{12} (k)$ given by 
\begin{align}
U_1 (k) &= \innerf{\gy_1}{\gy_1}{k}, \\
U_2 (k) &= \innerf{\gy_2}{\gy_2}{k}, \\
C_{12} (k) &= \innerf{\gy_1}{\gy_2}{k},
\end{align}
which do not correspond to any conservation law, but are useful in studying  the dissipation rate spectra for energy and potential enstrophy. Since $U_1 (k) + U_2 (k)\pm 2C_{12}(k) = \innerf{\gy_1\pm\gy_2}{\gy_1\pm\gy_2}{k} \geq 0$, it follows that $C_{12}(k)$ is restricted by an arithmetic-geometric mean inequality 
\begin{equation}
2|C_{12}(k)| \leq U_1 (k)+ U_2 (k),
\end{equation}
whereas both $U_1 (k)$ and $U_2(k)$ are positive-definite and satisfy $U_1 (k)\geq 0$ and $U_2 (k)\geq 0$ over all wavenumbers $k\in (0,+\infty)$.

Our point of departure is the definition of the barotropic energy spectrum $E_K (k)$ and the baroclinic energy spectrum $E_P (k)$ \cite{article:Salmon:1980,article:Tung:2007}. Let $\gy = (\gy_1 + \gy_2)/2$ and $\gt = (\gy_1-\gy_2)/2$ and note that $\gy_1 = \gy + \gt$ and $\gy_2 = \gy-\gt$. It follows that
\begin{align}
q_1 &= \lapl (\gy+\gt) - k_R^2 \gt, \\
q_2 &= \lapl (\gy-\gt) + k_R^2 \gt,
\end{align}
and therefore
\begin{align}
\innerf{\gy_1}{q_1}{k} &= \innerf{\gy+\gt}{\lapl (\gy+\gt) - k_R^2 \gt}{k}  \\
&= \innerf{\gy}{\lapl\gy}{k} +  \innerf{\gy}{\lapl\gt}{k} +  \innerf{\gt}{\lapl\gy}{k} +  \innerf{\gt}{\lapl\gt}{k} - k_R^2 \innerf{\gy+\gt}{\gt}{k},
\label{eq:energy-spec-term-one}
\end{align}
and
\begin{align}
\innerf{\gy_2}{q_2}{k} &= \innerf{\gy-\gt}{\lapl (\gy-\gt) + k_R^2 \gt}{k} \\
&= \innerf{\gy}{\lapl\gy}{k} -  \innerf{\gy}{\lapl\gt}{k} -  \innerf{\gt}{\lapl\gy}{k} +  \innerf{\gt}{\lapl\gt}{k} - k_R^2 \innerf{\gt-\gy}{\gt}{k}.
\label{eq:energy-spec-term-two}
\end{align}
Adding Eq.~\eqref{eq:energy-spec-term-one} and Eq.~\eqref{eq:energy-spec-term-two} gives the energy spectrum $E(k)$ which simplifies to
\begin{align}
E(k) &= -\innerf{\gy_1}{q_1}{k}-\innerf{\gy_2}{q_2}{k} 
= -2\innerf{\gy}{\lapl\gy}{k} -2\innerf{\gt}{\lapl\gt}{k} + 2k_R^2 \innerf{\gt}{\gt}{k} \\
&= 2k^2 \innerf{\gy}{\gy}{k} + 2(k^2 + k_R^2)\innerf{\gt}{\gt}{k}.
\end{align}
The $\psi$-dependent term corresponds to the barotropic energy spectrum $E_K (k)$ and the $\gt$-dependent term corresponds to the baroclinic energy spectrum $E_P (k)$. Consequently, we define:
\begin{align}
E_K (k) &= 2k^2 \innerf{\gy}{\gy}{k}, \\
E_P (k) &= 2(k^2 + k_R^2)\innerf{\gt}{\gt}{k}.
\end{align}
Following Salmon \cite{article:Salmon:1980} and previous work \cite{article:Tung:2007}, it is also useful to define 
\begin{equation}
E_C (k) = 2k^2 \innerf{\gy}{\gt}{k}.
\end{equation}

Let us define a function $P(k)$ such that $E_P (k) = P(k) E(k)$ and $E_K (k) = [1-P(k)] E(k)$ with $P(k)$ satisfying the inequality $0\leq P(k)\leq 1$. Following the phenomenology proposed by Salmon \cite{article:Salmon:1978,article:Salmon:1980,book:Salmon:1998}, we anticipate that for wavenumbers $k\ll k_R$ near the forcing range we have $P(k) \approx 0$ and that for wavenumbers $k\gg k_R$ beyond the Rossby wavenumber we have $P(k) \approx 1/2$. Going one step further, we can establish a relationship between $E_C(k)$ and $E(k)$ if we introduce an additional function $\gC (k)$ such that $G_1 (k) = \gC (k) G(k)$ and $G_2 (k) = [1-\gC (k)] G(k)$, capturing the distribution of potential enstrophy between the upper and lower layers. It is important to note that the nonlinear terms conserve potential enstrophy for each layer separately and cannot redistribute potential enstrophy between the two layers. However, asymmetric dissipation may remove potential enstrophy at different dissipation rates between the two layers, and result in a variation of $\gC (k)$ with increasing wavenumbers. 

First, we observe that $G_1 (k)$ and $G_2 (k)$ can be written in terms of $E_K (k), E_P (k), E_C (k)$ as follows:
\begin{align}
G_1 (k) &= \innerf{q_1}{q_1}{k} \\
&= \innerf{\lapl (\gy+\gt) - k_R^2 \gt}{\lapl (\gy+\gt) - k_R^2 \gt}{k} \\
&= \innerf{\lapl\gy}{\lapl\gy}{k} + 2\innerf{\lapl\gy}{\lapl\gt - k_R^2 \gt}{k} + \innerf{\lapl\gt - k_R^2 \gt}{\lapl\gt - k_R^2 \gt}{k} \\
&= k^4 \innerf{\gy}{\gy}{k} + 2k^2 (k^2+k_R^2) \innerf{\gy}{\gt}{k} + (k^2 + k_R^2)^2 \innerf{\gt}{\gt}{k} \\
&= (1/2) k^2 E_K (k) + (1/2)(k^2+k_R^2) E_P (k) + (k^2+k_R^2) E_C (k),
\label{eq:potential-enstr-spec-one}
\end{align}
and
\begin{align}
G_2 (k) &= \innerf{q_2}{q_2}{k} \\
&= \innerf{\lapl (\gy-\gt) + k_R^2 \gt}{\lapl (\gy-\gt) + k_R^2 \gt}{k} \\
&= \innerf{\lapl\gy}{\lapl\gy}{k} + 2\innerf{\lapl\gy}{-\lapl\gt + k_R^2 \gt}{k} + \innerf{-\lapl\gt + k_R^2 \gt}{-\lapl\gt + k_R^2 \gt}{k} \\
&= k^4 \innerf{\gy}{\gy}{k} - 2k^2 (k^2+k_R^2) \innerf{\gy}{\gt}{k} + (k^2 + k_R^2)^2 \innerf{\gt}{\gt}{k} \\
&= (1/2) k^2 E_K (k) + (1/2)(k^2+k_R^2) E_P (k) - (k^2+k_R^2) E_C (k).
\label{eq:potential-enstr-spec-two}
\end{align}
Adding Eq.~\eqref{eq:potential-enstr-spec-one} and Eq.~\eqref{eq:potential-enstr-spec-two} gives a relationship between the potential enstrophy spectrum $G (k)$ and the energy spectrum  $E (k)$, in terms of the function $P (k)$:
\begin{align}
G (k) &= G_1 (k) + G_2 (k) = k^2 E_K (k) + (k^2 + k_R^2) E_P (k) \\
&= [k^2 (1-P (k))+(k^2+k_R^2) P (k)] E (k),
\end{align}
which is in turn used to calculate the spectrum $E_c (k)$ in terms of $E (k)$, $P (k)$, and $\gC (k)$ by noting that if we instead subtract Eq.~\eqref{eq:potential-enstr-spec-two} from Eq.\eqref{eq:potential-enstr-spec-one}, we obtain:
\begin{equation}
G_1 (k) - G_2 (k) = 2(k^2+k_R^2) E_C (k),
\end{equation}
and solving for $E_C (k)$ gives:
\begin{align}
E_C (k) &= \frac{G_1 (k) - G_2 (k)}{2(k^2+k_R^2)} 
= \frac{\gC (k) G(k)-[1-\gC (k)] G(k)}{2(k^2+k_R^2)} 
= \frac{[2\gC (k)-1] G(k)}{2(k^2+k_R^2)} \\
&= \frac{[2\gC (k)-1][k^2 (1-P(k))+(k^2+k_R^2) P(k)]}{2(k^2+k_R^2)} E(k) 
= \frac{[2\gC (k)-1][k^2+k_R^2 P(k)] E(k)}{2(k^2+k_R^2)}.
\end{align}
Finally, given the relationships between the spectra $E_K (k)$, $E_P (k)$, $E_C (k)$ and the total energy spectrum $E(k)$ as well as the functions $P(k)$ and $\gC (k)$, we are also able to express the streamfunction spectra $U_1 (k)$, $U_2 (k)$, and $C_{12} (k)$ in terms of $E(k)$, $P(k)$, $\gC (k)$, as shown in the following:
\begin{align}
U_1 (k) &= \innerf{\gy_1}{\gy_1}{k}  = \innerf{\gy+\gt}{\gy+\gt}{k}  
= \innerf{\gy}{\gy}{k}  +2\innerf{\gy}{\gt}{k}  + \innerf{\gt}{\gt}{k}  \\
&= \frac{E_K (k)}{2k^2}+\frac{E_C (k)}{k^2}+\frac{E_P (k)}{2(k^2+k_R^2)}
= \frac{(k^2+k_R^2) E_K (k) + k^2 E_P (k)+2(k^2+k_R^2) E_C (k)}{2k^2(k^2+k_R^2)} \\
&= \frac{(k^2+k_R^2)[1-P(k)]+k^2 P(k)+[2\gC (k)-1][k^2+k_R^2 P(k)]}{2k^2(k^2+k_R^2)} E(k),
\label{eq:U-one-spectrum}
\end{align}
and
\begin{align}
U_2 (k) &= \innerf{\gy_2}{\gy_2}{k}  = \innerf{\gy-\gt}{\gy-\gt}{k}  
= \innerf{\gy}{\gy}{k}  - 2\innerf{\gy}{\gt}{k}  + \innerf{\gt}{\gt}{k} \\
&= \frac{E_K (k)}{2k^2}-\frac{E_C (k)}{k^2}+\frac{E_P (k)}{2(k^2+k_R^2)}
= \frac{(k^2+k_R^2) E_K (k) + k^2 E_P (k)-2(k^2+k_R^2) E_C (k)}{2k^2(k^2+k_R^2)} \\
&= \frac{(k^2+k_R^2)[1-P(k)]+k^2 P(k)-[2\gC (k)-1][k^2+k_R^2 P(k)]}{2k^2(k^2+k_R^2)} E(k),
\label{eq:U-two-spectrum}
\end{align}
and
\begin{align}
C_{12}(k) &= \innerf{\gy_1}{\gy_2}{k}  = \innerf{\gy+\gt}{\gy-\gt}{k}  
= \innerf{\gy}{\gy}{k} - \innerf{\gt}{\gt}{k} \\
&= \frac{E_K (k)}{2k^2} - \frac{E_P (k)}{2(k^2+k_R^2)}
= \frac{(k^2+k_R^2) E_K (k) - k^2 E_P (k)}{2k^2(k^2+k_R^2)} \\
&= \frac{(k^2+k_R^2)[1-P(k)] - k^2 P(k)}{2k^2(k^2+k_R^2)} E(k).
\label{eq:C-one-two-spectrum}
\end{align}
Our first major result is an immediate consequence of the fact that the streamfunction spectra $U_1 (k)$ and $U_2 (k)$ are unconditionally positive over all wavenumbers and it establishes a restriction between $P (k)$ and $\gC (k)$, given by the following proposition:
\begin{proposition}
For all wavenumbers $k\in (0,+\infty)$, we have:
\begin{equation}
|2\gC (k)-1| \leq \frac{k^2+[1-P (k)] k_R^2}{k^2+k_R^2 P (k)}.
\label{eq:distribution-constraint}
\end{equation}
\label{prop:distribution-constraint}
\end{proposition}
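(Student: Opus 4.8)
The plan is to obtain the inequality directly from the unconditional positivity of the streamfunction spectra $U_1(k)$ and $U_2(k)$, noted just before the proposition. Fix a wavenumber $k\in(0,+\infty)$ with $E(k)>0$. In Eq.~\eqref{eq:U-one-spectrum} and Eq.~\eqref{eq:U-two-spectrum} the common prefactor $E(k)/[2k^2(k^2+k_R^2)]$ is strictly positive, so $U_1(k)\geq 0$ and $U_2(k)\geq 0$ force the two bracketed numerators to be non-negative. Abbreviating $S=(k^2+k_R^2)[1-P(k)]+k^2 P(k)$ and $T=k^2+k_R^2 P(k)$, these two conditions read $S+[2\gC(k)-1]\,T\geq 0$ and $S-[2\gC(k)-1]\,T\geq 0$, and since they hold simultaneously they are equivalent to the single two-sided bound $|2\gC(k)-1|\,T\leq S$ (the two signs of $2\gC(k)-1$ being covered automatically).

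The next step is a short algebraic simplification of $S$. Expanding and cancelling the $\pm k^2 P(k)$ terms gives $S=k^2+k_R^2[1-P(k)]$, which is precisely the numerator on the right-hand side of Eq.~\eqref{eq:distribution-constraint}, while $T=k^2+k_R^2 P(k)>0$ because $k>0$. Dividing $|2\gC(k)-1|\,T\leq S$ through by $T$ then yields Eq.~\eqref{eq:distribution-constraint} for every $k$ with $E(k)>0$.

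It remains only to dispose of the wavenumbers with $E(k)=0$. By the relation $G(k)=[k^2+k_R^2 P(k)]\,E(k)$ established immediately above the proposition, $E(k)=0$ forces $G(k)=G_1(k)=G_2(k)=0$, so the defining relations $G_1(k)=\gC(k)G(k)$ and $G_2(k)=[1-\gC(k)]G(k)$ place no constraint on $\gC(k)$; adopting the natural convention $\gC(k)=1/2$ there makes the left-hand side of Eq.~\eqref{eq:distribution-constraint} vanish while the right-hand side stays positive, so the inequality holds trivially. I do not anticipate a genuine obstacle here: the proposition is an algebraic repackaging of the positivity of $U_1(k)$ and $U_2(k)$, and the only points needing a moment of care are confirming that the shared prefactor is strictly positive—so that the bound on $U_1,U_2$ transfers to their numerators—and handling the degenerate case $E(k)=0$.
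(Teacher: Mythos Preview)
Your proposal is correct and follows essentially the same route as the paper: both arguments extract the two-sided bound on $2\gC(k)-1$ directly from the non-negativity of the numerators in Eq.~\eqref{eq:U-one-spectrum} and Eq.~\eqref{eq:U-two-spectrum}, simplify the common numerator to $k^2+k_R^2[1-P(k)]$, and divide by $T=k^2+k_R^2 P(k)>0$. You are in fact slightly more careful than the paper in isolating the degenerate case $E(k)=0$, where $\gC(k)$ is undefined and the implication from $U_\alpha(k)\geq 0$ to the numerator inequality would be vacuous.
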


\begin{proof}
Since $U_1 (k) = \innerf{\gy_1}{\gy_1}{k} \geq 0$, and $E(k) \geq 0$, and $k^2 (k^2+k_R^2) > 0$ for all wavenumbers $k\in (0, +\infty)$, it follows from Eq.~\eqref{eq:U-one-spectrum} that
\begin{align}
&(k^2+k_R^2)[1-P(k)] + k^2 P(k) + [2\gC (k)-1][k^2+k_R^2 P(k)] \geq 0 \\
\implies& k^2 + k_R^2 [1-P(k)] + [2\gC (k)-1][k^2+k_R^2 P(k)] \geq 0 \\
\implies& [2\gC (k)-1][k^2+k_R^2 P(k)]  \geq -\{k^2 + k_R^2 [1-P(k)]]\}.
\end{align}
Noting that $P(k)\geq 0$ implies that $k^2+k_R^2 P(k) >0$, it follows that
\begin{equation}
2\gC (k)-1 \geq \frac{-\{k^2+[1-P(k)] k_R^2\}}{k^2+k_R^2 P(k)}.
\label{eq:distribution-constraint-eq-one}
\end{equation}
Likewise, since $U_2 (k) = \innerf{\gy_2}{\gy_2}{k} \geq 0$, via a similar argument with Eq.~\eqref{eq:U-two-spectrum}, we have
\begin{align}
&(k^2+k_R^2)[1-P(k)] + k^2 P(k) - [2\gC (k)-1][k^2+k_R^2 P(k)] \geq 0 \\
\implies& [2\gC (k)-1][k^2+k_R^2 P(k)] \leq k^2 + k_R^2 [1-P(k)] \\
\implies& 2\gC (k)-1 \leq \frac{k^2+(1-P(k)) k_R^2}{k^2+k_R^2 P(k)}.
\label{eq:distribution-constraint-eq-two}
\end{align}
Combining Eq.~\eqref{eq:distribution-constraint-eq-one} and Eq.~\eqref{eq:distribution-constraint-eq-two} proves the claim
\begin{equation}
|2\gC (k)-1| \leq \frac{k^2+[1-P (k)]) k_R^2}{k^2+k_R^2 P (k)}.
\end{equation}
\end{proof}

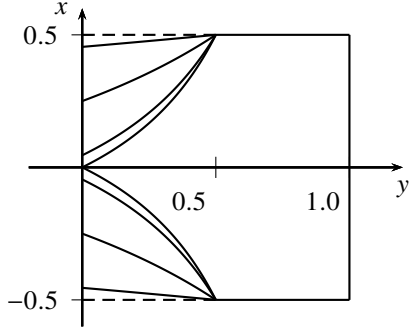
\begin{figure}[tb]\begin{center}
\psset{unit=100pt,labelsep=5pt,algebraic=true}
\begin{pspicture}[shift=*](-0.4,-0.7)(1.3,0.7)
\psaxes[Dx=0.5,Dy=0.5,xlabelOffset=-0.1]{->}(0,0)(-0.2,-0.6)(1.2,0.6)[$y$,-90][$x$,180]
\psplot[plotstyle=curve,plotpoints=200]{0}{0.5}{x/(2*(1-x))}
\psplot[plotstyle=curve,plotpoints=200]{0}{0.5}{-x/(2*(1-x))}
\psline{-}(0.5,0.5)(1,0.5)
\psline(0.5,-0.5)(1,-0.5)
\psline[linestyle=dashed](0,0.5)(0.5,0.5)
\psline[linestyle=dashed](0,-0.5)(0.5,-0.5)
\psline(1,0.5)(1,-0.5)
\psplot[plotstyle=curve,plotpoints=200]{0}{0.5}{(0.1+x)/(2*(0.1+1-x))}
\psplot[plotstyle=curve,plotpoints=200]{0}{0.5}{-(0.1+x)/(2*(0.1+1-x))}
\psplot[plotstyle=curve,plotpoints=200]{0}{0.5}{(1+x)/(2*(1+1-x))}
\psplot[plotstyle=curve,plotpoints=200]{0}{0.5}{-(1+x)/(2*(1+1-x))}
\psplot[plotstyle=curve,plotpoints=200]{0}{0.5}{(10+x)/(2*(10+1-x))}
\psplot[plotstyle=curve,plotpoints=200]{0}{0.5}{-(10+x)/(2*(10+1-x))}
\end{pspicture}
\caption{\label{fig:BE-boxes}\small This graph shows the region constraining $\gC (k)$ and $P(k)$ using the representation $\gC (k) = 1/2+x$ and $P(k) = 1-y$. The smallest ``pointy box'' corresponds to the limit $k\ll k_R$. The larger boxes correspond to the wavenumber ratios $k/k_R = 10^{-1/2}, 1, 10^{1/2}$, with the box area increasing with the ratio $k/k_R$. For $k/k_R < 0.1$ the box becomes graphically indistinguishable from the limit $k\ll k_R$. The ``pointy box'' becomes graphically indistinguishable from the limit $k\gg k_R$, where it converges to a square,  when $k/k_R > 10$}
\end{center}\end{figure}

It should be noted that just from the basic restriction $0 \leq \gC (k) \leq 1$, we have $|2\gC (k)-1| \leq 1$. In the limit $k \gg k_R$, the inequality given by Eq.~\eqref{eq:distribution-constraint} also reduces to $|2\gC (k)-1| \leq 1$, and as such, it does not impose any further restrictions on $\gC (k)$. However, in the limit $k\ll k_R$, Eq.~\eqref{eq:distribution-constraint} reduces to $|2\gC (k)-1| \leq (1-P(k))/P(k)$, and with no loss of generality we have:
\begin{equation}
|2\gC (k)-1| \leq \min\left\{ 1, \frac{1-P(k)}{P(k)} \right\}, \quad \text{for } k\ll k_R.
\end{equation}
This restriction corresponds to the smallest ``pointy box'', shown in Fig.~\ref{fig:BE-boxes}, using the representation $\gC(k) = 1/2+x$ and $P(k)=1-y$. For wavenumbers $k/k_R < 0.1$, the region will expand, but the expansion is too small to be seen graphically, consequently, Eq. (73) is a pretty good approximation for any wavenumbers $k/k_R < 0.1$, corresponding to both the forcing range as well as the range of wavenumbers where we expect to see the downscale potential enstrophy cascade energy spectrum scaling $k^{-3}$. 

From a physical point of view, this restriction is very important. If we assume that at the forcing range, all of the injected energy is injected as baroclinic energy, then we expect that \emph{initially} $P(k) \approx 1$, which, in turn, implies that $\gC (k) \approx 1/2$, corresponding to an equipartition of the potential enstrophy between the upper and lower layers. This implies that the baroclinic energy injection at wavenumbers $k\ll k_R$ has to be accompanied with a symmetric injection of potential enstrophy to both layers,  resulting in an initial equipartition of potential enstrophy, in the potential enstrophy spectrum, between the upper layer and the lower layer. This restriction on $\gC(k)$ is weakened, when the flow reaches steady state and the energy is redistributed in the limit $k\ll k_R$ to become predominantly barotropic,  allowing the distribution of potential enstrophy between layers to deviate from exact equipartition. In Section 5, we will show that the asymmetric Ekman term tends to stabilize the fixed point distribution of potential enstrophy between layers,  with more potential enstrophy concentrated in the upper layer than the lower layer.

\section{Dissipation rate spectra for asymmetric Ekman term}

The two-layer quasi-geostrophic model's nonlinear terms conserve the total energy $E$ as well as the total potential enstrophy $G_1$ and $G_2$ in the upper and lower layers. The corresponding conservation laws are written in terms of the time derivative of the energy spectrum $E(k)$ and the potential enstrophy spectra $G_1 (k)$ and $G_2 (k)$ and they read
\begin{align}
\pderiv{E(k)}{t}&+\pderiv{\Pi_E (k)}{k} = -D_E (k) + F_E (k), \\
\pderiv{G_1(k)}{t}&+\pderiv{\Pi_{G_1} (k)}{k} = -D_{G_1} (k) + F_{G_1} (k), \\
\pderiv{G_2(k)}{t}&+\pderiv{\Pi_{G_2} (k)}{k} = -D_{G_2} (k) + F_{G_2} (k).
\end{align}
Here $\Pi_E (k)$ represents the energy flux from the $(0, k)$ interval to the $(k, +\infty)$ interval via the nonlinear term; $D_E (k)$ represents the energy dissipation rate spectrum accounting for the removal of energy via the dissipation terms; $F_E (k)$ represents the energy forcing spectrum accounting for the injection of energy via the forcing terms. Similar definitions apply to $\Pi_{G_1} (k)$, $D_{G_1} (k)$, $F_{G_1} (k)$ for the upper layer potential enstrophy conservation law and to $\Pi_{G_2} (k)$, $D_{G_2} (k)$, $F_{G_2} (k)$ for the lower layer potential enstrophy conservation law. The conservation laws themselves are accounted for via the boundary conditions
\begin{align}
&\Pi_E (0) = \Pi_{G_1} (0) = \Pi_{G_2} (0) = 0, \\
&\lim_{k\to +\infty} \Pi_E (k) = \lim_{k\to +\infty} \Pi_{G_1} (k) = \lim_{k\to +\infty} \Pi_{G_2} (k) = 0.
\end{align}
Our main interest here is to understand the contribution of the asymmetric Ekman term to the dissipation rate spectra $D_E (k)$, $D_{G_1}(k)$, and $D_{G_2}(k)$.

To that end, we begin with a previous general result \cite{article:Gkioulekas:p16} for the dissipation rate spectra for a generalized multilayer model of the form
\begin{equation}
\pderiv{q_\ga}{t}+J(\gy_\ga, q_\ga) = f_\ga + d_\ga,
\end{equation}
with $\ga\in\{1, 2, \ldots, n\}$ representing the layer index. Given the Fourier transform $\hat{\gy_\ga}$ of the streamfunction field so that
\begin{equation}
\gy_\ga (\bfx, t) = \int_{\bbR^2}\hat{\gy}_\ga (\bfk, t)\exp (i\bfk\cdot\bfx)\;\df{\bfk},
\end{equation}
we assume that the relationship between $q_\ga$ and $\gy_\ga$ takes a general linear form
\begin{equation}
q_\ga (\bfx, t) = \sum_{\gb}\int_{\bbR^2} L_{\ab}(\nrm{\bfk}) \hat{\gy}_\gb (\bfk, t) \exp(i\bfk\cdot\bfx)\;\df{\bfk},
\end{equation}
with the additional assumption $L_{\ab} (\bfk) = L_{\ba} (\bfk)$. We also assume that the dissipation terms $d_\ga$ are given as general linear transforms of the streamfunction fields $\gy_\ga$ so that
\begin{equation}
d_{\ga} (\bfx, t) = \sum_{\gb}\int_{\bbR^2} D_{\ab}(\nrm{\bfk})\hat{\gy}_\gb (\bfk, t)\exp (i\bfk\cdot\bfx)\;\df{\bfk}.
\end{equation}
Under these assumptions, we have shown \cite{article:Gkioulekas:p16} that the dissipation rate spectra $D_E (k)$ and $D_{G_\ga} (k)$ can be expressed in terms of the streamfunction spectra $C_{\ab} (k) = \innerf{\gy_\ga}{\gy_\ga}{k}$ via the equations
\begin{align}
D_E (k) &= 2 \sum_{\ab} D_{\ab} (k) C_{\ab}(k), \label{eq:general-DE-spectrum} \\
D_{G_\ga}(k) &= -2\sum_{\bc} L_{\ab}(k) D_{\ac}(k) C_{\bc}(k). \label{eq:general-DG-spectrum}
\end{align}
Note that in the summation symbols, written above, it is implied that the indices are being summed over all layers. 

For the case of the two-layer quasi-geostrophic model, the function $L_{\ab}(k)$ is a $2\times 2$ square matrix given by
\begin{equation}
L(k) = -\mattwo{a(k)}{b(k)}{b(k)}{a(k)},
\end{equation}
with $a(k)=k^2+k_R^2/2$ and $b(k)=-k_R^2/2$. Likewise, the dissipation term function $D_{\ab}(k)$ is given by
\begin{equation}
D(k)=\mattwo{D_1 (k)}{0}{\mu d(k)}{D_2 (k)+d(k)},
\end{equation}
with $D_1 (k)=\nu k^{2p+2}$, $D_2 (k) = (\nu + \gD\nu) k^{2p+2}$, and $d(k) = \nu_E k^2/(\mu +1)$. Since we are interested only in the contribution of the asymmetric Ekman term to the dissipation rate spectra, we will assume that $D_1(k) = D_2 (k) = 0$ and use instead
\begin{equation}
D(k) = \mattwo{0}{0}{\mu d(k)}{d(k)}.
\end{equation}
Consequently, from Eq.~\eqref{eq:general-DE-spectrum}, the energy dissipation rate spectrum $D_E (k)$ is given by
\begin{align}
D_E (k) &= 2[D_{21} (k) C_{21}(k)+D_{22}(k) C_{22}(k)]
= 2\mu d(k) C_{12}(k)+d(k)U_2 (k) \\
&= [2\mu C_{12}(k)+U_2 (k)] d(k),
\end{align}
noting that the contributions that correspond to $D_{11}(k)$ and $D_{12}(k)$ vanish, because $D_{11}(k)=D_{12}(k)=0$. The potential enstrophy dissipation rate spectrum for the upper layer is zero, because, from Eq.~\eqref{eq:general-DG-spectrum}
\begin{equation}
D_{G_1}(k) = -2[L_{11}(k) D_{11}(k) C_{11}(k) + L_{11}(k) D_{12}(k) C_{12}(k) + L_{12}(k) D_{11}(k) C_{21}(k) + L_{12}(k) D_{12}(k) C_{22}(k)] = 0,
\end{equation}
and we note that all contributions involve $D_{11}(k)$ and $D_{12}(k)$, both of which vanish. It follows that the asymmetric Ekman term conserves potential enstrophy in the upper layer and only dissipates potential enstrophy from the lower layer. The potential enstrophy dissipation rate spectrum for the lower layer is given by 
\begin{align}
D_{G_2}(k) &= -2[L_{21}(k) D_{21}(k) C_{11}(k) + L_{21}(k) D_{22}(k) C_{12}(k) + L_{22}(k) D_{21}(k) C_{21}(k) + L_{22}(k) D_{22}(k) C_{22}(k)] \\
&= 2[b(k) \mu d(k) U_1 (k) + b(k) d(k) C_{12}(k) + a(k) \mu d(k) C_{12}(k) + a(k) d(k) U_2 (k)] \\
&= [2b(k) \mu U_1 (k) + 2a(k) U_2 (k) + (b(k) + \mu a(k)) C_{12}(k) ] d(k).
\end{align}

Our previous investigation of the asymmetric Ekman term \cite{article:Gkioulekas:p15} was inconclusive because no phenomenological assumptions were made, and without making any such assumptions, we have no useful knowledge about the streamfunction spectra $U_1 (k)$, $U_2 (k)$, and $C_{12}(k)$. However, as we have seen in Section~\ref{sec:energy-potential-enstrophy-spectra}, given the function $P(k)$, describing the distribution of energy in the energy spectrum $E(k)$ between baroclinic and barotropic energy, and given the function $\gC (k)$, describing the distribution of potential enstrophy in the potential enstrophy spectra $G_1 (k)$ and $G_2 (k)$ between the upper and lower layers, it is possible to express the streamfunction spectra $U_1 (k)$, $U_2 (k)$, and $C_{12}(k)$ in terms of the energy spectrum $E(k)$ and the functions $P(k)$ and $\gC (k)$ via Eq.~\eqref{eq:U-one-spectrum}, Eq.~\eqref{eq:U-two-spectrum}, and Eq.~\eqref{eq:C-one-two-spectrum}. Substituting these equations to our expressions above for the dissipation rate spectra $D_E (k)$, $D_{G_1}(k)$, and $D_{G_1}(k)$ results in very tedious calculations, for which we have used the open source computer algebra system Maxima \cite{maxima}, leading to the following equations:
\begin{align}
D_E (k) &= \frac{[B_E^{(1)}(k) (k/k_R)^2 + B_E^{(2)}(k)]  k_R^2 d(k) E(k)}{k^2 (k^2+k_R^2)}, \label{eq:DE-equation} \\
D_{G_2} (k) &= \frac{[B_G^{(1)}(k) (k/k_R)^4 +B_G^{(2)}(k) (k/k_R)^2 +B_G^{(3)}(k)] k_R^4 d(k) E(k)}{2k^2 (k^2+k_R^2)}, \label{eq:DG2-equation}
\end{align}
with $B_E^{(1)}(k)$ and $B_E^{(2)}(k)$ non-dimensional coefficients given by
\begin{align}
B_E^{(1)}(k) &= 2[1-\gC (k)]+\mu [1-2P(k)], \\
B_E^{(2)}(k) &= [1-2\gC (k) P(k)]+\mu [1-P(k)],
\end{align}
and likewise with $B_G^{(1)}(k)$, $B_G^{(2)}(k)$, and $B_G^{(3)}(k)$ given by
\begin{align}
B_G^{(1)}(k) &= 4[1-\gC (k)]+2\mu [1-2P(k)], \\
B_G^{(2)}(k) &= [-4\gC(k) P(k) +2P(k) -2\gC (k) +3]+\mu [3-2\gC (k) -4P(k)], \\
B_G^{(3)}(k) &= (\mu+1)[1-2\gC (k)] P(k).
\end{align}
We note that $d(k)>0$ and $E(k)\geq 0$, so most of the physical arguments given in the following are based on determining the signs of the coefficients $B_E^{(1)}(k) $, $B_E^{(2)}(k) $, $B_G^{(1)}(k) $, $B_G^{(2)}(k)$, and $B_G^{(3)}(k)$. Note that since $D_{G_1}(k)=0$, we do not need to concern ourselves with the potential enstrophy dissipation rate of the upper layer.

Using these equations for the dissipation rate spectra $D_E (k)$ and $D_{G_2}(k)$ as a point of departure, we will now try to bring out as much physical insight as we can about the role of the asymmetric Ekman dissipation term in the phenomenology of the two-layer quasi-geostrophic model. 

\section{The potential enstrophy dissipation rate spectrum $D_{G_2}(k)$}
\label{sec:potential-enstrophy-dissipation-rate}

The effect of the asymmetric Ekman term on the potential enstrophy dissipation rate is fundamentally different between the limit $k\ll k_R$, corresponding to the forcing range and part of the observable downscale potential enstrophy cascade, and the limit $k \gg k_R$ corresponding to the observable downscale energy cascade. First, we note that the coefficients $B_G^{(1)}(k)$, $B_G^{(2)}(k)$, and $B_G^{(3)}(k)$ are all bounded over all wavenumbers $k\in (0,+\infty)$, as shown in the following:
\begin{align}
|B_G^{(1)}(k)|  &= |4[1-\gC (k)]+2\mu [1-2P(k)]| \\
&\leq 4|1-\gC (k)|-2\mu |1-2P(k)| \\
&\leq 4-2\mu (1+2) = 4-6\mu, \\
|B_G^{(2)}(k)| &= |[-4\gC(k) P(k) +2P(k) -2\gC (k) +3]+\mu [3-2\gC (k) -4P(k)]| \\
&\leq 4|\gC(k) P(k)| +2|P(k)|+2|\gC (k)| +3 -\mu [3+2|\gC (k)| +4|P(k)|] \\
&\leq 4+2+2+3-\mu (3+2+4) = 11-9\mu, \\
|B_G^{(3)}(k)| &= |(\mu+1)[1-2\gC (k)] P(k)| \\
&= (\mu+1)|1-2\gC (k)| |P(k)| \\
&\leq (\mu + 1)(1+2)1 = 3(\mu +1).
\end{align}
Consequently, in the limit $k\ll k_R$, using Eq.~\eqref{eq:DG2-equation}, the dominant contribution to the potential enstrophy dissipation rate spectrum $D_{G_2} (k)$ is given by 
\begin{align}
D_{G_2}(k) &\sim \frac{k_R^4 B_G^{(3)}(k) d(k) E(k)}{2k^2 k_R^2} \\
&\sim \frac{\mu+1}{2}\fracp{k_R}{k}^2 [1-2\gC (k)] P(k) d(k) E(k), \quad\text{with } k\ll k_R.
\end{align}

Since $\mu\in [-1/3, 0]$, we have $\mu+1>0$, and furthermore $P(k)\geq 0$ and $d(k)>0$ and $E(k)\geq 0$ for all wavenumbers $k\in (0, +\infty)$, it follows that the sign of the leading term contribution to $D_{G_2}(k)$ is controlled exclusively by the factor $1-2\gC (k)$, which is positive when $\gC (k)<1/2$ and negative when $\gC (k)>1/2$. This creates a very interesting dynamic. As we have explained previously, the antisymmetric forcing of the two-layer quasi-geostrophic model injects energy at the wavenumbers $k\ll k_R$ as baroclinic energy, consequently we anticipate that for $k\ll k_R$, we initially have $P(k)\approx 1$, and therefore, via the inequality Eq.~\eqref{eq:distribution-constraint}, $\gC (k)$ is constrained in a very narrow interval around $1/2$. This means that an equal amount of potential enstrophy is injected on both the upper and lower layers, along with the baroclinic energy injection, which cannot be redistributed afterwards by the nonlinear interactions, since the potential enstrophy of each layer is separately conserved, except via the dissipation terms, and in the limit $k\ll k_R$, specifically, the asymmetric Ekman term. When $\gC (k)$ rises above $1/2$ (i.e. more potential enstrophy in the upper layer than in the lower layer at the given wavenumber $k$), then $D_{G_2}(k)$ becomes negative and it actually injects potential enstrophy into the lower layer, thereby decreasing $\gC (k)$ back towards $1/2$. Likewise, when $\gC (k)$ falls below $1/2$, then $D_{G_2}(k)$ becomes positive and removes potential enstrophy from the lower layer. This tends to increase $\gC (k)$ back towards $1/2$. As a result, the initial effect of the asymmetric Ekman term in the limit $k \ll k_R$ is to create a stable fixed point for $\gC (k)$ near $1/2$ where the potential enstrophy dissipation vanishes. This allows the potential enstrophy injected onto both layers to cascade towards large wavenumbers without any dissipative distortion. It also stabilizes the potential enstrophy distribution between the two layers so that it is approximately equipartitioned between the two layers. 

The subleading contribution to the potential enstrophy dissipation rate spectrum $D_{G_2}(k)$ becomes important when the leading contribution is suppressed by the numerical coefficient $1-2\gC (k)$. Furthermore, as the flow reaches steady state and becomes predominantly barotropic,  the factor $P(k)$ approaches (but does not converge to) 0, which further suppresses the leading term,  independently of the $1-2\gC (k)$ factor. The sign of the subleading contribution is controlled by the numerical coefficient $B_G^{(2)}(k)$. In \ref{sec:sign-of-BG-two}, we show that when $\mu = -1/3$, it follows unconditionally, via Proposition~\ref{prop:sign-of-BG-two-prop-one}, that $B_G^{(2)}(k)>0$ for all wavenumbers $k\in (0, +\infty)$. The case $\mu = -1/3$ corresponds to extrapolated Ekman dissipation in which the surface layer is placed at $1$ Atm. For all other cases, we have shown via  Proposition~\ref{prop:sign-of-BG-two-prop-two} that, under the assumption $0\leq \gC (k)<1$, we have:
\begin{align}
\systwo{-1/3 < \mu < 0}{\ds |2\gC (k)-1| \leq \min\left\{1, \frac{1-P(k)}{P(k)}\right\}} &\implies B_G^{(2)}(k) >0, \label{eq:sign-of-BG-two-eq-one} \\
\systwo{\mu = 0}{\ds |2\gC (k)-1| < \min\left\{1, \frac{1-P(k)}{P(k)}\right\}} &\implies B_G^{(2)}(k) >0. \label{eq:sign-of-BG-two-eq-two}
\end{align}
The case $\gC (k)=1$ corresponds to no potential enstrophy in the lower layer at wavenumber $k$, and it is trivial, since, in the absense of any potential enstrophy in the lower layer, the corresponding potential enstrophy dissipation rate spectrum will be zero. Consequently, we are not worried about the strict inequality in the condition for Eq.~\eqref{eq:sign-of-BG-two-eq-two}.  The inequality restriction on $\gC (k)$ on the conditions for Eq.~\eqref{eq:sign-of-BG-two-eq-one} and Eq.~\eqref{eq:sign-of-BG-two-eq-two} is the $k\ll k_R$ limit of the rigorous restriction on $\gC (k)$  established via  Proposition~\ref{prop:distribution-constraint}, so it is expected to be satisfied.  In fact,  as shown in Fig.~\ref{fig:BE-boxes},  the limit $k\ll k_R$ restriction on $\gC (k)$ is graphically indistinguishable from the exact inequality of Proposition 1, when $k/k_R < 0.1$. Furthermore, as the flow becomes increasingly barotropic and $P(k)$ gets closer to $0$, it becomes easier to satisfy the constraint of Eq.~\eqref{eq:sign-of-BG-two-eq-one} and Eq.~\eqref{eq:sign-of-BG-two-eq-two}. In fact, when more than half of the energy is barotropic (i.e. $P(k) < 1/2$), the constraints of Eq.~\eqref{eq:sign-of-BG-two-eq-one} and Eq.~\eqref{eq:sign-of-BG-two-eq-two} are satisfied unconditionally. 

The conclusion $B_G^{(2)}(k)>0$ implies that, if we have exactly $\gC (k)=1/2$ (i.e. the leading contribution is exactly equal to zero), then the positive subleading contribution results in $D_{G_2}(k)>0$. Consequently, the asymmetric Ekman term will dissipate potential enstrophy from the lower layer and tend to be increase $\gC (k)$ above $1/2$. This results in a competition between the leading and the subleading contributions, with the leading contribution being negative and the subleading contribution being positive. The two contributions balance out at some location $\gC (k) = 1/2+\gc_0 (k)$ with $\gc_0 (k)>0$, and that is the more precise location of the stable fixed point in which the potential enstrophy dissipation vanishes. 

The expected barotropization of the flow, as we approach steady state, tends to further suppress the leading contribution to $D_{G_2}(k)$ the via the $P(k)$ factor on $B_G^{(1)}(k)$. In fact, the leading contribution vanishes with $P(k)=0$.  On the other hand, in the extreme limit $P(k)=0$, the coefficient $B_G^{(2)}(k)$ of the subleading term remains finite, because it simplifies to
\begin{equation}
B_G^{(2)}(k) = (1+\mu)(3-2\gC(k)) \geq 1+\mu \geq 2/3.
\end{equation}
Here, we have used $\gC (k)\leq 1$, for the first inequality, and $\mu\geq -1/3$, for the second inequality. Physically,  this means that as the flow becomes increasingly barotropic, in the limit $k\ll k_R$, the location of the stable fixed point should shift towards redistributing more potential enstrophy towards the upper layer. On the other hand, based on Rhines \cite{article:Rhines:1979} we may expect that a typical order of magnitude estimate for $P(k)$ for $k\ll k_R$ is $P(k) \sim 0.1$, which will be counteracted by the $(k/k_R)^2$ factor in front of $B_{G_2}(k)$ in the subleading term in Eq.~\eqref{eq:DG2-equation} that has an even smaller order of magnitude, in the limit $k\ll k_R$, when there is at least one decade separation between the two wavenumbers. Consequently, we anticipate that the shift away from equipartition will be limited.  

Now, let us consider the potential enstrophy dissipation rate spectrum $D_{G_2}(k)$ of the asymmetric Ekman term in the limit $k\gg k_R$. From Eq.~\eqref{eq:DG2-equation}, the leading contribution to $D_{G_2}(k)$ is now given by:
\begin{align}
D_{G_2}(k) &\sim \frac{B_G^{(1)}(k) k^4 d(k) E(k)}{2k^4} \\
&\sim (1/2) B_G^{(1)}(k) d(k) E(k),
\quad \text{with } k \gg k_R,
\end{align}
with $B_G^{(1)}(k)$ given by
\begin{equation}
B_G^{(1)}(k) = 2[1-\gC (k)]+\mu [1-2P(k)]. \label{eq:BG-one-definition}
\end{equation}

For the case of standard Ekman dissipation (i.e. $\mu =0$), the inequality $0\leq\gC(k)<1$ immediately gives $B_G^{(1)}(k)>0$ and therefore $D_{G_2}(k)>0$, meaning that potential enstrophy will be dissipated from the lower layer. Since this will tend to move $\gC (k)$ towards $1$, as the total potential enstrophy becomes increasingly concentrated in the upper layer, the leading contribution to $D_{G_2}(k)$ may be overtaken by the subleading term, whose sign is determined by the numerical coefficient $B_G^{(2)}(k)$. In the limit $k\gg k_R$, we anticipate, according to Salmon's phenomenology \cite{article:Salmon:1978,article:Salmon:1980,book:Salmon:1998}, that approximately half of the energy in the energy spectrum will be barotropic,  with the other half baroclinic, although we cannot conclude whether most of the energy will be barotropic or baroclinic.  As long as at least half of the energy spectrum at the wavenumber $k$ is barotropic, the assumptions of Eq.~\eqref{eq:sign-of-BG-two-eq-one} and Eq.~\eqref{eq:sign-of-BG-two-eq-two} will be satisfied, so we expect that $B_G^{(2)}(k)>0$, which in turn implies that the potential enstrophy dissipation rate spectrum $D_{G_2}(k)$ will remain positive. It follows that,  contrary to the behavior of the Ekman term at small wavenumbers, we expect it to solely dissipate potential enstrophy from the lower layer in the limit $k\gg k_R$. In doing so, it will tend to increase $\gC (k)$ towards a stable fixed point $\gC (k)=1$ where all potential enstrophy becomes concentrated in the upper layer.

The dynamic changes when more than half of the energy in the energy spectrum at $k$ is baroclinic (i.e. $P(k)>1/2$). In this case, due to the limit $k\gg k_R$, the inequality restriction on $\gC (k)$ by Proposition 1 is almost entirely gone, and it is therefore possible to violate the inequality conditions needed for Eq. (108) and Eq. (109). In fact,  Proposition 6 shows that for the case of the standard Ekman term (i.e. $\mu=0$), the coefficient $B_G^{(2)}(k)$ will satisfy $B_G^{(2)}(k)<0$ when $\gC (k)$ is sufficiently close to $1$ if and only if $P(k)>1/2$. Since the coefficient $B_G^{(1)}(k)$ still satisfies $B_G^{(1)}(k)>0$,  it follows that when $B_G^{(2)}(k)<0$, we should expect a competition between the positive leading term and negative subleading term, resulting in shifting the stable fixed point, where potential enstrophy is neither injected nor dissipated, to $\gC (k)$ slightly below $1$. Otherwise, the overall tendency of the asymmetric Ekman term is still to concentrate most of the potential enstrophy on the upper layer.

For the case of extrapolated Ekman dissipation, with $\mu\in [-1/3, 0)$, the sign of $B_G^{(1)}(k)$ can be positive or negative, depending on the values of $\gC (k)$ and $P(k)$. Consequently, it is possible that the Ekman term may be injecting or dissipating potential enstrophy from the lower layer. In \ref{sec:sign-of-BG-one}, we show that in general 
\begin{equation}
\gC (k)<5/6 \implies B_G^{(1)}(k)>0,
\end{equation}
therefore, as long as less than $5/6$ of the potential enstrophy at wavenumber $k$ is concentrated in the upper layer, the Ekman term will dissipate potential enstrophy from the lower layer  thus  tending to increase the concentration of potential enstrophy on the upper layer.  When  $B_G^{(1)}(k)$ is close to zero or negative, consideration needs to be given again to the subleading contribution whose sign is being controlled by $B_G^{(2)}(k)$. According to Proposition~\ref{prop:sign-of-BG-two-prop-one}, when $\mu=-1/3$ (i.e the Ekman term is placed exactly at the surface boundary layer), then we have $B_G^{(2)}(k)>0$ unconditionally, and the stable fixed point, where potential enstrophy is not dissipated, will be placed at $5/6<\gC (k)<1$. According to Proposition~\ref{prop:sign-of-BG-two-prop-three},  when $\mu\in (-1/3,0)$ (i.e. when the Ekman term is placed between the surface boundary layer and the lower layer), then we have $B_G^{(2)}(k)>0$ unconditionally if and only if $1-P(k)>(1+3\mu)/(2+4\mu)$ with the right-hand side of the inequality being some number between $1/2$ and $0$. This means that if the energy at wavenumber $k$   is approximately half barotropic and half baroclinic, then we will have $B_G^{(2)}(k)>0$, and the stable fixed point remains at $5/6<\gC (k)<1$. In the event that the flow is predominantly baroclinic, we can have  $B_G^{(2)}(k)<0$ when $\gC (k)$ is sufficiently close to $1$. The crossover for $P(k)$, where this becomes possible, is between $1/2$ and $1$ with the interval squeezing onto $1$ as the Ekman term is placed closer to the surface boundary layer,   corresponding to increasing the required ratio of baroclinic to barotropic energy. From Eq.~\eqref{eq:BG-one-definition} we see that when $1/2<P(k)<1$,  we will also have unconditionally $B_G^{(1)}(k)\geq\mu (1-2P(k))>0$, which means that it is possible that the leading term could dominate over the subleading term when $\gC (k)=1$, even with $B_G^{(2)}(k)<0$. It  is also possible that the two terms can balance and shift the stable fixed-point where potential enstrophy  dissipation vanishes at $\gC (k)<1$, but still near $1$. Either way,  there is no significant departure in the overall phenomenology of the asymmetric Ekman term. 

In both cases we see that in the limit $k \gg k_R$, the asymmetric Ekman term dissipates potential enstrophy from the lower layer, thereby concentrating most of the potential enstrophy in the upper layer. The only difference between the standard and the extrapolated Ekman term is that the standard Ekman term dissipates potential enstrophy for any $\gC(k)$, when at least half of the energy is barotropic, otherwise the stable fixed point of vanishing potential enstrophy dissipation could shift slightly below $1$. For the extrapolated Ekman term, the stable fixed point is likely to shift below $1$ as well, and if the energy is not predominantly baroclinic then we can bound the fixed point at $5/6<\gC(k)<1$. Aside from this minor difference, we anticipate similar phenomenology in both cases. 

\section{The energy dissipation rate spectrum $D_E (k)$}
\label{sec:energy-dissipation-rate-spectrum}

The effect of the asymmetric Ekman term on the energy dissipation rate spectrum $D_E (k)$ is very obvious for the case of standard Ekman dissipation, where the Ekman term is placed at the lower potential enstrophy layer. Recall that, in general, the energy dissipation rate spectrum $D_E (k)$ is given by
\begin{equation}
D_E (k) = [U_2 (k) + 2\mu C_{12} (k)] d(k).
\end{equation}
For the case of standard Ekman dissipation, we have $\mu=0$, and therefore $D_E  (k) = U_2  (k) d (k)$. Since both $U_2  (k)\geq 0$ and $d (k) \geq 0$, it follows that $D_E  (k) \geq 0$, consequently the asymmetric Ekman term will always dissipate energy, consistently with the physics underlying Ekman friction. As a result, further investigation is not needed. 

For the case of extrapolated Ekman dissipation, where the Ekman term is placed either between the lower potential enstrophy layer and the surface boundary layer or at the surface boundary layer, we have $-1/3 \leq \mu <0$ and because $C_{12}(k)$ can be positive or negative, it is not obvious whether the energy dissipation rate spectrum $D_E(k)$ is always positive. In this section, we shall consider the sign of $D_E(k)$ in the limits $k \ll k_R$ and $k \gg k_R$ in terms of the distribution of energy between baroclinic and barotropic energy per wavenumber $k$ and in terms of the distribution of potential enstrophy between the upper and lower layers. These two parameters, captured by $\gC(k)$ and $P(k)$, define a two-dimensional space, and we will show that over most of the area of that space, $D_E(k)$ is positive, even though there are some small regions where $D_E(k)$ could be negative. 

We begin by noting that the coefficients $B_E^{(1)} (k)$ and $B_E^{(2)} (k)$ are bounded over all wavenumbers $k\in (0, +\infty)$, as shown in the following:
\begin{align}
|B_E^{(1)} (k)| &= |2[1-\gC (k)]+\mu [1-2P (k)]| \\ 
&\leq 2|1-\gC  (k)|-\mu |1-2P (k)| \\
&\leq 2[1+|\gC  (k)|]-\mu [1+2|P (k)|] \\ 
&\leq 2(1+1)-\mu (1+2) = 4-3\mu,
\end{align}
and
\begin{align}
|B_E^{(2)} (k)| &= |[1-2\gC  (k) P (k)]+\mu [1-P (k)]| \\ 
&\leq |1-2\gC  (k) P (k)|-\mu |1-P (k)| \\
&\leq 1+2|\gC  (k) P (k)|-\mu (1+|P (k)|) \\ 
&\leq 1+2-\mu (1+1) = 3-2\mu.
\end{align}
From Eq.~\eqref{eq:DE-equation}, it follows that in the limit $k\ll k_R$, the leading contribution in the energy dissipation rate spectrum $D_E  (k)$ is given by
\begin{equation}
D_E  (k) \sim \frac{B_E^{(2)} (k) d (k)E (k)}{k^2} \quad\text{with } k\ll k_R,
\end{equation}
whereas, in the limit $k\gg k_R$, the leading contribution is instead given by
\begin{equation}
D_E  (k) \sim \frac{B_E^{(1)} (k) d (k) E (k)}{k^2} \quad\text{with } k\gg k_R.
\end{equation}
It is therefore relevant to determine whether the coefficients $B_E^{(1)} (k)$ and $B_E^{(2)} (k)$ are positive or negative. Similarly to the arguments that we used for the corresponding coefficients controlling the potential enstrophy dissipation rate spectrum (see \ref{sec:sign-of-BG-two}), for any given wavenumber $k$ we write $\gC  (k)=1/2+x$ and $P (k)=1-y$ with $x\in [-1/2, 1/2]$ and $y\in [0, 1]$. Then $B_E^{(1)} (k)$ and $B_E^{(2)} (k)$ simplify to:
\begin{align}
B_E^{(1)} (k) &= 2[1-\gC (k)]+\mu [1-2P (k)] \\ 
&= 1-\mu -2x+2\mu y,
\end{align}
and
\begin{align}
B_E^{(2)} (k) &= [1-2\gC (k) P (k)] + \mu [1-P (k)] \\ 
&= y-2x+2xy+\mu y.
\end{align}
We also note that the restriction
\begin{equation}
|2\gC (k)-1| \leq \min\left\{1, \frac{1-P (k)}{P (k)}\right\},
\end{equation}
which follows from Eq.~\eqref{eq:distribution-constraint} in the limit $k\ll k_R$ can be rewritten in terms of $x,y$ to read:
\begin{equation}
|2x| \leq\min\left\{1,\frac{y}{1-y}\right\}.
\label{eq:distribution-constraint-xy}
\end{equation}

\begin{figure}[tb]\begin{center}
\psset{unit=100pt,labelsep=5pt,algebraic=true}
\begin{pspicture}[shift=*](-0.4,-0.7)(1.3,0.7)
\psaxes[Dx=0.5,Dy=0.5,xlabelOffset=-0.1]{->}(0,0)(-0.2,-0.6)(1.2,0.6)[$y$,-90][$x$,180]
\psplot[plotstyle=curve,plotpoints=200]{0}{0.5}{x/(2*(1-x))}
\psplot[plotstyle=curve,plotpoints=200]{0}{0.5}{-x/(2*(1-x))}
\psline{-}(0.5,0.5)(1,0.5)
\psline(0.5,-0.5)(1,-0.5)
\psline[linestyle=dashed](0,0.5)(0.5,0.5)
\psline[linestyle=dashed](0,-0.5)(0.5,-0.5)
\psline(1,0.5)(1,-0.5)
\psline(0.5,0.5)(1,0.33)
\rput(0.75,0.25){$+$}
\rput(0.9,0.42){$-$}
%
\end{pspicture}
\caption{\label{fig:BE-one}\small This figure shows the graph of the equation $B_E^{(1)} (k)=0$, with $\mu=-1/3$, where the coefficient $B_E^{(1)} (k)$ has the crossover from positive to negative, as well as the graph of the equation $|2x| = \min\{1, y/(1-y)\}$ corresponding to the boundary given by Eq.~\eqref{eq:distribution-constraint-xy}. The $B_E^{(1)}(k)$ term dominates the energy dissipation rate spectrum $D_E(k)$ in the limit $k\gg k_R$.}
\end{center}\end{figure}
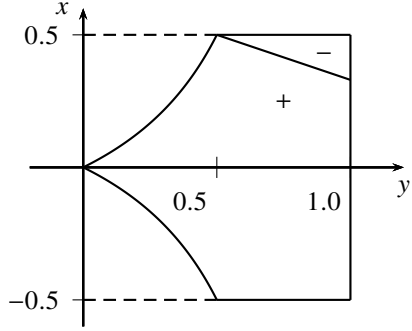

In Fig.~\ref{fig:BE-one}, we display the sign of the coefficient $B_E^{(1)} (k)$ in terms of $x$ and $y$. The ``pointy box'',  passing through the origin and the points $(x,y)\in\{(1/2, 1/2), (1/2, 1), (-1/2, 1), (-1/2, 1/2)\}$ encompasses the region that satisfies the constraint given by Eq.~\eqref{eq:distribution-constraint-xy}. This constraint is rigorous in the limit $k\ll k_R$, or equivalently the limit $k_R\to +\infty$, following from Eq.~\eqref{eq:distribution-constraint}. More generally, for finite Rossby wavenumber $k_R$, the curved part of the pointy box's boundary retreats towards smaller $y$, thereby expanding the area covered by the box, towards becoming a rectangle covering all $(x,y)\in [-1/2, 1/2]\times [-1, 1]$ in the opposite limit $k\gg k_R$ (see Fig.~\ref{fig:BE-boxes}). Since the sign of $B_E^{(1)}(k)$ becomes relevant in the limit $k\gg k_R$, the restriction  of Eq.~\eqref{eq:distribution-constraint-xy} is not relevant, and the distributions of energy and potential enstrophy are unrestricted and can be placed anywhere within the expanded box.

Fig.~\ref{fig:BE-one} also shows the curve defined by the equation $B_E^{(1)} (k)=0$, which is, as a matter of fact, a straight line, since:
\begin{align}
B_E^{(1)} (k)=0 &\ifonlyif 1-\mu -2x+2\mu y = 0 \\
&\ifonlyif x = \frac{1-\mu}{2}+\mu y.
\end{align}
The line passes through the points $(x, y) = (1/2, 1/2)$ and $(x, y) = ((1+\mu)/2, 1)$, with both points on the boundary of the pointy box. Since for $x=y=0$ it is obvious that $B_E^{(1)} (k)$ is positive, we expect that in general $B_E^{(1)} (k)$ is positive below the line, encompassing most of the area of the pointy box, as well as the expanded box, and $B_E^{(1)} (k)$ is negative only in the very small slice above the line. With $\mu\in [-1/3, 0)$, as $\mu$ approaches $0$, the line tends to become horizontal, with the small negative slice vanishing when $\mu = 0$. 

The sign of the energy dissipation rate spectrum $D_E (k)$ approaches the sign of $B_E^{(1)}(k)$ in the limit $k\gg k_R$, where we anticipate that the energy is approximately half barotropic and half baroclinic, and most of the potential enstrophy in the lower layer has been dissipated,  corresponding to $(x,y)$ near $(1/2, 1/2)$, placing us in the vicinity of the negative region. We see that if the flow is at least $1/2$ baroclinic at steady state,  then we're going to be outside the negative region and the asymmetric Ekman term will dissipate total energy.  If the flow is instead predominantly barotropic, then the asymmetric Ekman term can inject total energy,  if the potential enstrophy is also mostly concentrated in the upper layer. The threshold for the required concentration of potential enstrophy  in the upper layer tends to decrease as the Ekman term is placed closer to the surface boundary layer. 

Furthermore, we can predict that intensifying the asymmetric Ekman term, by increasing the coefficient $\nu_E$, will increase the concentration of potential enstrophy in the upper layer, thus pushing $x$ towards $1/2$. As long as most of the energy is barotropic (i.e. $y>1/2$), we can expect that increasing $\nu_E$ will tend to push us inside the negative strip for sufficiently large $\nu_E$. One may  ask whether the negative region is stable or unstable, meaning whether the dynamic of the Ekman term will tend to attract us towards the negative region or push us away from it. Intuitively, we anticipate that $D_E (k)<0$ would correspond to injection of barotropic energy at a greater rate than a corresponding dissipation of baroclinic energy, implying a tendency to increase $y$ and push us towards the upper right direction and further into the negative region. 

\begin{figure}[tb]\begin{center}
\psset{unit=100pt,labelsep=5pt,algebraic=true}
\begin{pspicture}[shift=*](-0.4,-0.7)(1.3,0.7)
\psaxes[Dx=0.5,Dy=0.5,xlabelOffset=-0.1]{->}(0,0)(-0.2,-0.6)(1.2,0.6)[$y$,-90][$x$,180]
\psplot[plotstyle=curve,plotpoints=200]{0}{0.5}{x/(2*(1-x))}
\psplot[plotstyle=curve,plotpoints=200]{0}{0.5}{-x/(2*(1-x))}
\psline{-}(0.5,0.5)(1,0.5)
\psline(0.5,-0.5)(1,-0.5)
\psline[linestyle=dashed](0,0.5)(0.5,0.5)
\psline[linestyle=dashed](0,-0.5)(0.5,-0.5)
\psline(1,0.5)(1,-0.5)
\psplot[plotstyle=curve,plotpoints=200]{0}{0.6}{x/(3*(1-x))}
\rput(0.75,0.25){$+$}
\rput(0.52,0.45){$-$}
\end{pspicture}
\caption{\label{fig:BE-two}\small This figure shows the graph of the equation $B_E^{(2)} (k)=0$, with $\mu=-1/3$,  where $B_E^{(2)} (k)$ has the crossover from positive to negative and the graph of the equation $|2x| = \min\{1, y/(1-y)\}$ corresponding to the boundary of the restriction given by Eq.~\eqref{eq:distribution-constraint-xy}.  The $B_E^{(2)}(k)$ term dominates the energy dissipation rate spectrum $D_E(k)$ in the limit $k\ll k_R$.}
\end{center}\end{figure}
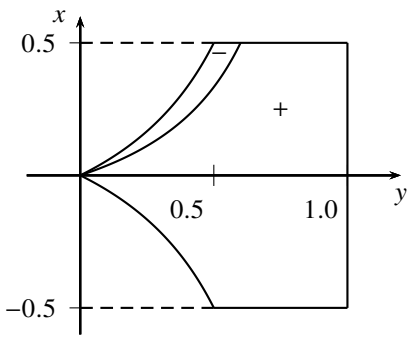

Now, let us consider the coefficient $B_E^{(2)} (k)$ which is relevant to the sign of the energy dissipation rate spectrum $D_E  (k)$ in the limit $k\ll k_R$. A possible cross-over from positive sign to negative sign occurs along the curve given by
\begin{align}
B_E^{(2)} (k) = 0 &\ifonlyif y-2x+2xy+\mu y = 0 \\
&\ifonlyif x = \frac{(\mu+1)y}{2(1-y)}.
\end{align}
We display this curve along with the pointy box given by Eq.~\eqref{eq:distribution-constraint-xy} in Fig.~\ref{fig:BE-two}. Since it is obvious that $B_E^{(2)} (k)$ is positive when $x=0$ and $y=1$, it follows that in the big region below the curve $B_E^{(2)} (k)=0$, we expect that $B_E^{(2)} (k)>0$ and in the small region above the curve $B_E^{(2)} (k)=0$, we expect that $B_E^{(2)} (k)<0$. About the curve, we note that it passes through the points $(x, y) = (0, 0)$ and $(x,y) = (1/2, 1/(\mu +2))$, with the second point corresponding with the intersection of the curve with the ``pointy box''. Since $\mu < 0$, it follows that $1/(\mu +2) > 1/2$, therefore the intersection occurs at the horizontal part of the pointy box where $x > 1/2$. Furthermore, in \ref{app:fig-two-geometry}, we show that the cross-over curve $B_E^{(2)} (k)=0$ is always below the curve $2x = y/(1-y)$ tracing the curvy part of the ``pointy box'', for any $\mu\in [-1/3, 0)$ over the interval $0<x\leq 1/2$. At $x = 1/2$, the pointy box boundary transitions into a horizontal line segment and the cross-over curve continues to remain below the ``pointy box'' boundary until they intersect at $x = 1/(\mu +2)$. For the case of standard Ekman dissipation (i.e. $\mu = 0$), the negative region of the coefficient $B_E^{(2)}(k)$ vanishes. 

In the limit $k\ll k_R$, as we discussed in the previous section, we are expecting that most of the energy is barotropic, corresponding to $x$ near $1$, and that the distribution of potential enstrophy between the two layers settles down on a fixed point with more potential enstrophy in the upper layer than the lower layer. Consequently,  the asymmetric Ekman term will have a positive dissipation rate spectrum $D_E (k)>0$ and will maintain the expected Salmon  phenomenology \cite{article:Salmon:1978,article:Salmon:1980,book:Salmon:1998} in the limit $k\ll k_R$, where an inverse barotropic energy cascade is expected to be dissipated by the Ekman term at small wavenumbers. In fact we see from Fig.~\ref{fig:BE-two} that the asymmetric Ekman term remains dissipative, as long as at least $1/2$ of the energy in the limit $k\ll k_R$ is barotropic, so, for most realistic parameterizations, it is very unlikely that we will have a distribution of energy and potential enstrophy placing us inside the narrow negative region of Fig.~\ref{fig:BE-two}. In the limit of very large Ekman coefficient $\nu_E$ we expect the flow at $k\ll k_R$ to become increasingly baroclinic, so it is not theoretically impossible to find ourselves inside the negative region. On the other hand,  we anticipate that if that were to happen,  then for $D_E (k)<0$ the tendency of the Ekman term will be to inject barotropic energy at a greater rate than it dissipates baroclinic energy,  and with the nonlinear interactions also having a weakened tendency to convert energy from baroclinic to barotropic, we would expect that the steady state solution would tend to be repelled by the negative region of Fig.~\ref{fig:BE-two}. 

In connection with the foregoing analysis, we should emphasize an important observation: comparing the negative regions for $B_E^{(1)} (k)$ and $B_E^{(2)} (k)$, as shown in Fig.~\ref{fig:BE-one} and Fig.~\ref{fig:BE-two}, they hardly ever have any overlap except when $y = 1/2$ and when $x$ is very near $1/2$. There is a very small spot there where both coefficients are negative, which is sufficient for a negative energy dissipation rate spectrum $D_E  (k)<0$. A consequence of this observation is that for finite wavenumbers $k<k_R$ (as opposed to the limit $k\ll k_R$, i.e. $k_R \to +\infty$), being in the negative region of $B_E^{(2)} (k)$ is necessary but not sufficient for having a negative energy dissipation rate spectrum $D_E  (k)$, because the subleading term, controlled by $B_E^{(1)} (k)$ is going to be positive, so the actual region where $D_E  (k)$ is negative is, in fact, a subset of the negative region shown in Fig.~\ref{fig:BE-two}. For finite wavenumbers $k>k_R$, a similar argument applies, from which we conclude that the actual region where $D_E  (k)$ is negative is a subset of the negative region shown in Fig.~\ref{fig:BE-one}.

Overall, we expect that the asymmetric Ekman term will dissipate energy, both in its standard form and in the extrapolated form. We have seen that the negative regions where the energy dissipation rate spectrum may become negative are very small to begin with, and being within these negative regions is not even sufficient to ensure that the energy dissipation rate spectrum will be negative. We have also  seen that the expected phenomenology for $k\ll k_R$ will tend to drive the system away from the corresponding negative region, whereas for $k\gg k_R$, it will tend to drive the system into the negative region, if the energy is predominantly barotropic.

\section{Conclusion and Discussion}

A fundamental difference between the two-layer quasi-geostrophic model and the two-dimensional Navier-Stokes equations is that in the former, there is a wider variety in the possible configurations of the dissipation terms, resulting in different behaviors in the dissipation rate spectra of energy and potential enstrophy. These can have a substantial effect in the phenomenology of the two-layer quasi-geostrophic model. In this paper, we have focused on the energy and potential enstrophy dissipation rate spectra that result from the Ekman term, when it is placed asymmetrically only on the potential vorticity equation for the lower layer but not for the upper layer. We have also considered two distinct formulations of the Ekman term. In the standard formulation, the Ekman term depends only on the streamfunction of the lower layer. In the extrapolated formulation, the Ekman term uses the streamfunctions of both layers to extrapolate a surface-layer streamfunction placed below the lower layer. Overall, the differences in the phenomenology between these two formulations are minor. 

In order to make headway in analyzing the resulting dissipation rate spectra, we have introduced a function $P(k)$ describing the distribution of energy at the wavenumber $k$ between baroclinic and barotropic energy. We have also introduced a function $\gC (k)$ describing the distribution of potential enstrophy between the upper layer and the lower layer. This makes it possible to calculate the energy dissipation rate spectrum $D_E (k)$ and the potential enstrophy dissipation rate spectrum $D_{G_2} (k)$, corresponding to the Ekman term, in terms of the total energy spectrum $E(k)$ and the functions $P(k)$ and $\gC (k)$. Our main results have been presented in the introduction of the paper and explained in detail in the body of the paper. We provide a brief summary and some concluding thoughts in the following:

First, we have shown that the functions $P(k)$ and $\gC (k)$ are restricted by an inequality that is a rigorous mathematical constraint. As a result, for wavenumbers $k\ll k_R$, when most of the energy is initially baroclinic, the potential enstrophy partition between the upper and lower layer has to be close to symmetric. As we move to wavenumbers $k$ that approach the Rossby wavenumbers $k_R$, or as more energy is converted from baroclinic to barotropic, this restriction on the distribution of potential enstrophy between the two layers is relaxed. For wavenumbers $k\gg k_R$ there is no restriction. 

Second, we have shown that for wavenumbers $k\ll k_R$, the tendency of the Ekman term is to stabilize the distribution of potential enstrophy between the two layers towards a stable fixed point distribution in which the Ekman term does not dissipate potential enstrophy. Away from the fixed point distribution, the Ekman term will either remove or inject potential enstrophy into the lower layer with a tendency to push the potential enstrophy distribution back towards the fixed point. The actual location of the fixed point is close to an equipartition of  potential enstrophy between layers and is such that more potential enstrophy is concentrated in the upper layer than the lower layer. This phenomenology is expected both for the standard and for the extrapolated formulation of the Ekman term. 

Thirdly, we have shown that in the limit $k\gg k_R$, the Ekman term, under the standard formulation, will dissipate potential enstrophy from the lower layer unconditionally. Under the extrapolated formulation, it will remain dissipative for almost all distributions of potential enstrophy between the two layers,  with the caveat that there may be a value for $\gC (k)$ very close to $1$,  corresponding to almost all potential enstrophy concentrated in the upper layer,   where the potential enstrophy dissipation rate spectrum $D_{G_2}(k)$ becomes zero and then crosses over to injecting potential enstrophy in the lower layer. If such a value of $\gC (k)$ exists, it will also be a stable fixed-point.

Finally, we have shown that, in its standard formulation, the Ekman term unconditionally dissipates energy over all wavenumbers $k$. However, in the case of the extrapolated formulation, there exist negative regions, both in the limit $k\ll k_R$ and $k\gg k_R$ where the Ekman term may be injecting energy. These negative regions are displayed in Fig.~\ref{fig:BE-one} and Fig.~\ref{fig:BE-two}. The negative region corresponding to the limit $k\ll k_R$ is inaccessible, when at least half of the energy is barotropic, as is expected in most realistic situations  at steady state, so the asymmetric Ekman term plays the same role, in the predicted Salmon phenomenology \cite{article:Salmon:1978,article:Salmon:1980,book:Salmon:1998}, of dissipating an inverse barotropic energy cascade, in both the standard and the extrapolated formulation. Furthermore, we expect that being within the negative region, either prior to reaching steady state, or in the case of very large $\nu_E$ whereby the inverse barotropic energy cascade is entirely arrested,  will tend to push the energy distribution back towards the barotropic direction. The negative region in the limit $k\gg k_R$ is accessible only when at least half of the energy is barotropic,  and being within the negative region will tend to make the energy distribution more barotropic,  pushing the dynamic further inwards the negative region.  The negative region, however, is inaccessible when at least half of the energy is baroclinic. Although there is a very small region where we can rigorously show that the Ekman term becomes injective, the true extent of the region where the Ekman term becomes injective is unclear, but, we expect it to be limited. 

For the reader that has not yet carefully followed the mathematical study of the energy and potential enstrophy dissipation rate spectra of the Ekman term, the notion that it may inject energy and potential enstrophy  may seem peculiar and unexpected, however it is a direct consequence of placing the Ekman term at the lower layer but not the upper layer. We have previously shown  \cite{article:Gkioulekas:p16} that if the same dissipation operator is applied to the streamfunction of each layer to construct the dissipation term for that layer, then the overall energy and potential enstrophy dissipation rate spectrum will be unconditionally positive over all wavenumbers. This means that if the same Ekman term is placed on both layers, using the streamfunction of the corresponding layer, then the terms will unconditionally dissipate both energy and potential enstrophy.

This study is only one first step towards understanding the role played by the Ekman term in the phenomenology of the two-layer quasi-geostrophic model. We hope that this paper will rekindle new interest in this interesting problem.

\section*{Acknowledgements}

The paper was substantially improved thanks to constructive and very helpful criticism by an anonymous referee. The author also acknowledges email correspondence with Dr. Ka-Kit Tung. The open source computer algebra software Maxima \cite{maxima} was used to conduct the very tedious calculation of the energy and potential enstrophy dissipation rate spectra. This research has not been supported by external funding agencies. 

\appendix

\section{Extrapolating the streamfunction at the surface layer}
\label{sec:app-Ekman-term}

Let $\gy_1$ be the streamfunction at the upper layer and $\gy_2$ the streamfunction at the lower layer. Likewise, let $p_1$ be the pressure corresponding to the upper layer and $p_2$ the pressure corresponding to the lower layer. The surface layer is placed below the lower layer at pressure $p_s$. Then we extrapolate the streamfunction $\gy_s$ at the surface layer by requiring that the points $(p_s, \gy_s), (p_1, \gy_1), (p_2, \gy_2)$ be collinear. It follows that:
\begin{align*}
\gy_s &= \frac{\gy_2-\gy_1}{p_2-p_1}(p_s-p_2)+\gy_2 
= \frac{p_s-p_1}{p_2-p_1}\gy_2 + \frac{p_2-p_s}{p_2-p_1}\gy_1
= \frac{p_s-p_1}{p_2-p_1}\gy_2 + \frac{p_2-p_s}{p_s-p_1}\frac{p_s-p_1}{p_2-p_1}\gy_1.
\end{align*}
We define
\begin{equation}
\gl = \frac{p_s-p_1}{p_2-p_1} \quad\text{ and } \quad
\mu = \frac{p_2-p_s}{p_s-p_1},
\end{equation}
and therefore
\begin{equation}
\gy_s = \gl\gy_2+\gl\mu\gy_1.
\end{equation}
We also note that $\gl$ and $\mu$ are closely related because
\begin{align}
\gl (\mu+1) = \frac{p_s-p_1}{p_2-p_1}\left[\frac{p_2-p_s}{p_s-p_1} + 1\right]
= \frac{p_s-p_1}{p_2-p_1}\frac{p_2-p_1}{p_s-p_1}=1,
\end{align}
which implies that $\gl = 1/(\mu+1)$, and consequently
\begin{equation}
\gy_s = \frac{\mu\gy_1+\gy_2}{\mu+1}.
\end{equation}
Now, let us consider the restrictions applicable to the parameter $\mu$. First, note that the standard Ekman term corresponds to $p_s=p_2$, which immediately gives $\mu=0$. Consequently, in the dissipation rate spectra $D_E (k)$ and $D_{G_2} (k)$, all $\mu$-dependent terms are corrections that result from splitting the surface layer away from the lower layer. More generally, we show the following propositions:

\begin{proposition}
$0<p_1<p_2<p_s \implies -1<\mu <0$.
\label{prop:mu-restrictions-one}
\end{proposition}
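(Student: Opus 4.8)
The plan is to prove this by elementary sign analysis of the defining quotient $\mu = (p_2-p_s)/(p_s-p_1)$, supplemented by the algebraic identity $\mu+1 = (p_2-p_1)/(p_s-p_1)$ which is an immediate byproduct of the computation $\gl(\mu+1)=1$ carried out just above in this appendix. No estimates or limiting arguments are needed; the whole statement reduces to checking the signs of two numerators and one (shared) denominator.

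First I would establish the upper bound $\mu<0$. Under the hypothesis $0<p_1<p_2<p_s$, the chain of inequalities gives in particular $p_1<p_s$, so the denominator satisfies $p_s-p_1>0$; this also confirms that $\mu$ is well defined, which is the only point at which the strictness of the hypotheses is genuinely used. The numerator satisfies $p_2-p_s<0$ since $p_2<p_s$. A strictly negative quantity divided by a strictly positive one is strictly negative, hence $\mu<0$. Next I would establish the lower bound $\mu>-1$. Writing $\mu+1$ over the common denominator $p_s-p_1$ gives $\mu+1 = (p_2-p_s+p_s-p_1)/(p_s-p_1) = (p_2-p_1)/(p_s-p_1)$, whose numerator $p_2-p_1$ is positive because $p_1<p_2$ and whose denominator is positive as already noted; therefore $\mu+1>0$, i.e. $\mu>-1$. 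Combining the two bounds yields $-1<\mu<0$, which is the claim.

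I do not anticipate any real obstacle: the argument is pure sign bookkeeping on a single rational expression, and the only care required is to verify that $p_s-p_1\neq 0$, which the strict ordering of the pressures supplies at once. I would close by noting that rerunning the identical computation with the weaker hypothesis $p_2\leq p_s$ in place of $p_2<p_s$ produces the non-strict variant $-1<\mu\leq 0$ used elsewhere in the text, and that substituting the physical values $p_1=0.25$ Atm, $p_2=0.75$ Atm together with $p_2\leq p_s\leq 1$ Atm into the same two formulas for $\mu$ and $\mu+1$ narrows the range to $-1/3\leq\mu\leq 0$, with $\mu=-1/3$ attained exactly when $p_s=1$ Atm.
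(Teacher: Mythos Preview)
Your proof is correct and, like the paper's, is a direct sign analysis of the quotient $\mu=(p_2-p_s)/(p_s-p_1)$. The upper bound $\mu<0$ is handled identically. For the lower bound $\mu>-1$, you invoke the identity $\mu+1=(p_2-p_1)/(p_s-p_1)$ and read off the sign, whereas the paper instead bounds the original fraction by shrinking the denominator, writing
\[
\mu=\frac{p_2-p_s}{p_s-p_1}>\frac{p_2-p_s}{p_s-p_2}=-1,
\]
after noting $p_s-p_1>p_s-p_2>0$ and $p_2-p_s<0$. Your route is slightly cleaner and connects naturally to the relation $\gl(\mu+1)=1$ already established in the appendix; the paper's route avoids that identity but requires tracking the sign reversal when dividing by a negative numerator. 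Both are entirely elementary and equally rigorous.
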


\begin{proof}
Let us assume that $0<p_1<p_2<p_s $. Then
\begin{equation}
\mu = \frac{p_2-p_s}{p_s-p_1} > \frac{p_2-p_s}{p_s-p_2} = -1.
\end{equation}
To justify the inequality, we note that $p_2-p_s < 0$ and that
\begin{align}
p_1 < p_2 &\implies p_s-p_1 > p_s-p_2 > 0
\implies \frac{1}{p_s-p_1} < \frac{1}{p_s-p_2}
\implies \frac{p_2-p_s}{p_s-p_1} > \frac{p_2-p_s}{p_s-p_2}.
\end{align}
We also have
\begin{equation}
\mu = \frac{p_2-p_s}{p_s-p_1} < \frac{p_s-p_s}{p_s-p_1} = 0,
\end{equation}
where, the inequality is justified by $p_2 < p_s$ and $p_s-p_1 > 0$. We conclude that $-1 < \mu < 0$.
\end{proof}

\begin{proposition}
$\ds\systwo{p_1 = 1/4 \land p_2 = 3/4}{p_2 < p_s < 1} \implies -1/3 < \mu < 0$.
\label{prop:mu-restrictions-two}
\end{proposition}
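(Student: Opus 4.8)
The plan is to treat the two bounds on $\mu$ separately, each reducing to an elementary manipulation of the defining ratio $\mu = (p_2-p_s)/(p_s-p_1)$ after substituting $p_1 = 1/4$ and $p_2 = 3/4$, so that $\mu = (3/4-p_s)/(p_s-1/4)$. The one structural fact used throughout is that the hypothesis $p_2 < p_s < 1$ forces $p_s - p_1 = p_s - 1/4 > 0$; hence the denominator is strictly positive, and multiplying an inequality through by it preserves the direction.

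For the upper bound $\mu < 0$, I would invoke Proposition~\ref{prop:mu-restrictions-one}: the hypothesis gives $0 < p_1 = 1/4 < p_2 = 3/4 < p_s$, so that proposition immediately yields $-1 < \mu < 0$, and in particular $\mu < 0$. (Equivalently, one observes directly that $p_s > p_2$ makes the numerator $3/4 - p_s$ strictly negative while the denominator is strictly positive.)

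For the lower bound $\mu > -1/3$, I would start from the target inequality $(3/4 - p_s)/(p_s - 1/4) > -1/3$ and clear the positive denominator to obtain the equivalent statement $3/4 - p_s > -(1/3)(p_s - 1/4)$. Expanding the right-hand side and collecting the $p_s$ terms reduces this to $(2/3)\,p_s < 2/3$, i.e. $p_s < 1$, which is precisely the remaining half of the hypothesis. Running the equivalences backwards then gives $\mu > -1/3$, and combining with the upper bound establishes $-1/3 < \mu < 0$.

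There is no genuine obstacle here; the only point requiring care is the direction of the inequality when clearing the denominator, which is safe because $p_s - 1/4 > 0$ under the hypothesis. As a sanity check it is worth remarking that $\mu$ is a strictly decreasing function of $p_s$ on $(1/4, +\infty)$, since $\mu = (p_2 - p_1)/(p_s - p_1) - 1 = (1/2)/(p_s - 1/4) - 1$; thus the open interval $p_s \in (3/4, 1)$ maps bijectively onto $\mu \in (-1/3, 0)$, with endpoint values $\mu(3/4) = 0$ and $\mu(1) = (-1/4)/(3/4) = -1/3$.
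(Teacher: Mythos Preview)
Your proposal is correct and follows essentially the same approach as the paper: both invoke Proposition~\ref{prop:mu-restrictions-one} for the upper bound $\mu<0$, and both reduce the lower bound $\mu>-1/3$ to the hypothesis $p_s<1$ after substituting $p_1=1/4$, $p_2=3/4$. The only cosmetic difference is that the paper rewrites $\mu = -1 + 2/(4p_s-1)$ and reads off the inequality directly, whereas you clear the positive denominator and run the equivalences backwards; your closing sanity check is in fact exactly the paper's rewriting.
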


\begin{proof}
Let us assume that $p_1 = 1/4$ and $p_2 = 3/4$ and $p_2 < p_s < 1$. Then $\mu < 0$ is an immediate consequence of Proposition~\ref{prop:mu-restrictions-one}. Furthermore,
\begin{align}
\mu &= \frac{p_2-p_s}{p_s-p_1} 
= \frac{3/4-p_s}{p_s-1/4} 
= \frac{3-4p_s}{4p_s-1} 
= \frac{-(4p_s-1)+2}{4p_s-1}
= -1+\frac{2}{4p_s-1}
> -1+\frac{2}{4-1} = -\frac{1}{3}.
\end{align}
The inequality is justified by the assumption $p_s < 1$. We conclude that $-1/3 < \mu < 0$. 
\end{proof}

\section{The sign of the coefficient $B_G^{(2)}(k)$}
\label{sec:sign-of-BG-two}

In this section we derive Proposition~\ref{prop:sign-of-BG-two-prop-one}, Proposition~\ref{prop:sign-of-BG-two-prop-two}, and Proposition~\ref{prop:sign-of-BG-two-prop-three} that determine the sign of the coefficient $B_G^{(2)}(k)$. We show that for $\mu = -1/3$, $B_G^{(2)}(k)$ is unconditionally positive (see Proposition~\ref{prop:sign-of-BG-two-prop-one}), whereas for $\mu\in (-1/3, 0]$, $B_G^{(2)}(k)$ remains positive under the condition $|2\gC (k)-1| \leq\min\{1, [1-P(k)]/P(k)\}$ (see Proposition~\ref{prop:sign-of-BG-two-prop-two}). This condition is expected to hold in the limit $k\ll k_R$, but not when $k\gg k_R$, raising the question of whether the coefficient $B_G^{(2)}(k)$ can be negative in the latter limit. Proposition~\ref{prop:sign-of-BG-two-prop-three} shows that this is possible if and only if $1-P(k)<(1+3\mu)/(2+4\mu)$. 

In both Proposition~\ref{prop:sign-of-BG-two-prop-one} and Proposition~\ref{prop:sign-of-BG-two-prop-two}, we exclude the case $\gC (k)=1$, corresponding to no potential enstrophy in the lower layer, because then the potential enstrophy dissipation rate spectrum of the lower layer will be trivially zero for all wavenumbers, which is to be expected. The case $\mu = -1/3$ is handled  by Proposition~\ref{prop:sign-of-BG-two-prop-one} and the case $\mu\in (-1/3, 0]$ is handled by Proposition~\ref{prop:sign-of-BG-two-prop-two} and Proposition~\ref{prop:sign-of-BG-two-prop-three}.

 For the purpose of setting up the following arguments, let $k\in (0, +\infty)$ be a given wavenumber, and write $\gC (k) = 1/2+x$ and $P(k) = 1-y$ with $x\in [-1/2, 1/2)$ and $y\in [0, 1]$. Note that we are excluding the case $x=1/2$, corresponding to no potential enstrophy in the lower layer. Then, it follows that $B_G^{(2)}(k)$ can be rewritten as
\begin{align}
B_G^{(2)}(k) &= [-4\gC(k) P(k) + 2P(k) - 2\gC(k) + 3]+\mu [3-2\gC(k) -4P(k)] \\
&= 4xy-(2\mu+6)x+4\mu y+(2-2\mu), \label{eq:sign-of-BG-two-eq-one-xy}
\end{align}
and we use this expression as the starting point for the proofs given in the following:
\begin{proposition}
For $\mu = -1/3$ and $0 \leq \gC(k)<1$, we have $B_G^{(2)}(k)>0$ for all wavenumbers $k\in (0, +\infty)$.
\label{prop:sign-of-BG-two-prop-one}
\end{proposition}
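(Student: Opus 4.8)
The plan is to work directly from the $x,y$ parametrization in Eq.~\eqref{eq:sign-of-BG-two-eq-one-xy}, where $\gC(k) = 1/2 + x$ and $P(k) = 1-y$ with $x\in[-1/2,1/2)$ and $y\in[0,1]$. First I would substitute $\mu = -1/3$, which gives $2\mu+6 = 16/3$, $4\mu = -4/3$, and $2-2\mu = 8/3$, so that $B_G^{(2)}(k) = 4xy - (16/3)x - (4/3)y + 8/3$. Multiplying through by $3$ and collecting the $x$-dependence yields $3B_G^{(2)}(k) = 4[\,x(3y-4) - y + 2\,]$, so it suffices to show that the bracketed expression is strictly positive on the rectangle $x\in[-1/2,1/2)$, $y\in[0,1]$.

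The key observation is that $3y-4 < 0$ for every $y\in[0,1]$, since $3y \le 3 < 4$. Hence, for each fixed $y$, the bracketed quantity is a strictly decreasing affine function of $x$, and is therefore bounded below by its value at the right endpoint $x = 1/2$. A short computation gives, at $x = 1/2$, the value $(3y-4)/2 - y + 2 = y/2 \ge 0$. Because the point $x = 1/2$ is excluded by the hypothesis $\gC(k) < 1$ and the function is strictly decreasing in $x$, for every admissible $(x,y)$ we obtain $x(3y-4) - y + 2 > y/2 \ge 0$, and therefore $B_G^{(2)}(k) > 0$.

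I do not anticipate any genuine obstacle here; the only point requiring a little care is that the lower bound supplied by the excluded endpoint $x = 1/2$ is merely nonnegative (it vanishes at $y = 0$), so strict positivity has to be drawn from the strict monotonicity in $x$ rather than from the endpoint value itself. It is also worth noting that the argument uses only $\gC(k) < 1$, i.e.\ $x < 1/2$, together with the automatic ranges $P(k)\in[0,1]$ and $x\in[-1/2,1/2)$; no constraint coupling $\gC(k)$ and $P(k)$, such as Proposition~\ref{prop:distribution-constraint}, is invoked, which is precisely why the conclusion for $\mu = -1/3$ is unconditional, in contrast to the situation treated in Proposition~\ref{prop:sign-of-BG-two-prop-two}.
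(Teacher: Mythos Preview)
Your argument is correct. After the same substitution $\mu=-1/3$ and the same algebraic simplification to $3B_G^{(2)}(k)=4[\,3xy-4x-y+2\,]$, both you and the paper exploit that this expression is affine, but you make the smarter choice of variable. The paper writes it as $(3x-1)y-2(2x-1)$, i.e.\ as an affine function of $y$; since the $y$-coefficient $3x-1$ changes sign on $[-1/2,1/2)$, a two-case split on $x\le 1/3$ versus $x>1/3$ is forced, each case then using the appropriate endpoint $y=0$ or $y=1$. You instead write it as $x(3y-4)-y+2$, i.e.\ as an affine function of $x$; because $3y-4<0$ for every $y\in[0,1]$, the expression is strictly decreasing in $x$ uniformly in $y$, and a single evaluation at the excluded endpoint $x=1/2$ gives the lower bound $y/2\ge 0$, with strict positivity coming from $x<1/2$. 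Your route is shorter and case-free; the paper's route is slightly more explicit about which $y$-endpoint is relevant in each regime. Your remark about where strictness comes from (monotonicity in $x$, not the endpoint value, which vanishes at $y=0$) is exactly the delicate point, and you handle it correctly.
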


\begin{proof}
We substitute $\mu = -1/3$ to Eq.~\eqref{eq:sign-of-BG-two-eq-one-xy} and obtain:
\begin{align}
B_G^{(2)}(k) &= 4xy-(2\mu+6)x+4\mu y+(2-2\mu) \\
&= 4xy-(16/3)x-(4/3)y+8/3 \\
&= (4/3)(3xy-4x-y+2) \\
&= (4/3)[(3x-1)y-2(2x-1)].
\end{align}
Under the assumption $0\leq \gC (k)<1$, we have $x\in [-1/2, 1/2)$, so we distinguish between the following cases: 

\noindent
\emph{Case 1:} Assume that $x\in (1/3, 1/2)$. Then, we have $2x-1<0$ and $3x-1>0$, and we also note that $y\geq 0$. It follows that
\begin{align}
B_G^{(2)}(k) &= (4/3)[(3x-1)y-2(2x-1)] \\
&\geq (4/3)[-2(2x-1)] \\ 
&= -(8/3)(2x-1)>0.
\end{align}
The first weak inequality uses $3x-1>0$ and $y\geq 0$. The second strict inequality uses $2x-1<0$. 

\noindent
\emph{Case 2:} Assume that $x\in [-1/2, 1/3]$. Then, we have $3x-1 \leq 0$, and noting also that $y\leq 1$, it follows that
\begin{align}
B_G^{(2)}(k) &= (4/3)[(3x-1)y-2(2x-1)] \\
&\geq (4/3)[(3x-1)-2(2x-1)] \\
&= (4/3)(1-x)
\geq (4/3)(1-1/3) > 0.
\end{align}
The first weak inequality uses $3x-1 \leq 0$ and $y\leq 1$. The second weak inequality uses the hypothesis $x\leq 1/3$. The subsequent strong inequality is trivial. 

In both cases, we conclude that $B_G^{(2)}(k) > 0$ for all $(x, y)\in [-1/2, 1/2)\times [0,1]$, and that concludes the argument
\end{proof}

Unfortunately, this result does not generalize for $\mu\in (-1/3, 0]$, as we can see from the counterexample corresponding to $x=1/2$ and $y=0$. Substituting these values gives $B_G^{(2)}(k) = -1-3\mu$ and we note that for $\mu>-1/3$, we have $B_G^{(2)}(k) = -1-3\mu < -1-3(-1/3) = 0$. Although, technically, we are excluding the case $x = 1/2$, because it corresponds to having no potential enstrophy in the lower layer, we expect to have $B_G^{(2)}(k) < 0$ for a range of values of $x$ near $1/2$. That said, with an additional assumption, the following proposition generalizes Proposition~\ref{prop:sign-of-BG-two-prop-one}:
\begin{proposition}
Assume that $0 \leq \gC (k) <1$. Then
\begin{equation}
\systwo{-1/3 < \mu < 0}{|2\gC (k)-1| \leq \min\left\{1, \ds\frac{1-P(k)}{P(k)}\right\}} 
\implies B_G^{(2)}(k) > 0,
\end{equation}
and
\begin{equation}
\systwo{\mu = 0}{|2\gC (k)-1| < \min\left\{1, \ds\frac{1-P(k)}{P(k)}\right\}} 
\implies B_G^{(2)}(k) > 0.
\end{equation}
\label{prop:sign-of-BG-two-prop-two}
\end{proposition}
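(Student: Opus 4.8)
The plan is to carry out the same kind of case analysis used for Proposition~\ref{prop:sign-of-BG-two-prop-one}, but now keeping the hypothesis $|2\gC(k)-1|\leq\min\{1,(1-P(k))/P(k)\}$ in play throughout. First I would pass to the coordinates $\gC(k)=1/2+x$, $P(k)=1-y$ with $x\in[-1/2,1/2)$, $y\in[0,1]$ already set up for Eq.~\eqref{eq:sign-of-BG-two-eq-one-xy}, so that $B_G^{(2)}(k)=4xy-(2\mu+6)x+4\mu y+(2-2\mu)$ and the hypothesis reads $|2x|\leq\bar x(y):=\min\{1,\,y/(1-y)\}$, strict when $\mu=0$. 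The key structural observation is that $B_G^{(2)}$ is \emph{linear in $x$} with slope $4y-2\mu-6$, which is strictly negative for every $(y,\mu)\in[0,1]\times[-1/3,0]$ since $4y-2\mu-6\leq 4+2/3-6=-4/3$. Hence, for fixed $y$, $B_G^{(2)}$ is strictly decreasing in $x$ and attains its minimum over the admissible $x$-interval at the right endpoint $x=\bar x(y)/2$; it therefore suffices to bound $B_G^{(2)}$ from below there, after which monotonicity — together with the strict hypothesis in the $\mu=0$ case — yields the asserted strict inequality.

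Next I would split according to the piecewise $\min$. In the case $y\geq1/2$ one has $\bar x(y)=1$, so I evaluate at $x=1/2$ to get $B_G^{(2)}|_{x=1/2}=2(1+2\mu)y-(1+3\mu)$; for $\mu\in(-1/3,0)$ the coefficient $1+2\mu$ is positive, so this is increasing in $y$ and bounded below by its value $-\mu>0$ at $y=1/2$, while for $\mu=0$ it equals $2y-1\geq0$ and the strict hypothesis forces $x<1/2$, giving $B_G^{(2)}>2y-1\geq0$. In the case $y<1/2$ one has $\bar x(y)=y/(1-y)$; evaluating at $x=y/(2(1-y))$, multiplying by $2(1-y)>0$ and simplifying produces $(1-y)B_G^{(2)}|_{x=y/(2(1-y))}=Q(y)$ with $Q(y)=(2-4\mu)y^2+5(\mu-1)y+(2-2\mu)$.

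It then remains to show $Q(y)>0$ on $[0,1/2)$. I would compute $Q(0)=2-2\mu>0$ and $Q(1/2)=-\mu/2\geq0$, note that $Q$ is convex (leading coefficient $2-4\mu>0$), and check that its vertex $y^{\ast}=5(1-\mu)/(4(1-2\mu))$ exceeds $1/2$ for every $\mu\leq0$ (equivalently $\mu<3$). Consequently $Q$ is strictly decreasing on $[0,1/2]$, so $Q(y)>Q(1/2)=-\mu/2\geq0$ for $y<1/2$; dividing by $1-y>0$ and invoking monotonicity in $x$ closes this case. When $\mu=0$ one may instead observe directly that $Q(y)=(2y-1)(y-2)>0$ for $y<1/2$.

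The only genuinely non-routine point is this last one: confirming that the vertex of $Q$ lies to the right of $1/2$, so that the convex parabola is decreasing throughout the relevant interval and its infimum there is the manifestly nonnegative endpoint value $Q(1/2)=-\mu/2$. Beyond that, the work is bookkeeping — keeping the two branches of the constraint, the two parameter regimes $\mu=0$ and $\mu\in(-1/3,0)$, and the strict-versus-nonstrict inequalities all consistent — exactly as in the proof of Proposition~\ref{prop:sign-of-BG-two-prop-one}.
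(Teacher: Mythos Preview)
Your proposal is correct and follows essentially the same route as the paper: the same change of variables, the same observation that $B_G^{(2)}$ is linear and strictly decreasing in $x$, the same split according to which branch of $\min\{1,y/(1-y)\}$ is active, and the same quadratic $Q(y)=\phi(y,\mu)$ in the $y<1/2$ branch. The only cosmetic differences are that you show $Q>0$ on $[0,1/2)$ via convexity and the vertex location (the paper instead computes the discriminant and locates both roots to the right of $1/2$), and a harmless slip where you write ``multiplying by $2(1-y)$'' but then correctly use $(1-y)B_G^{(2)}=Q(y)$.
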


\begin{proof}
We begin our argument by rewriting $B_G^{(2)}(k)$ as follows:
\begin{align}
B_G^{(2)}(k) &= 4xy-(2\mu +6)x+4\mu y+2-2\mu \\ 
&= 2x[2y-(\mu+3)]+4\mu y+2-2\mu.
\end{align}
From the assumptions $y\in [0,1]$ and $\mu\in (-1/3, 0)$, we observe that the expression in the bracket satisfies:
\begin{align}
2y-(\mu+3) &\leq 2-(\mu +3) = -1-\mu \\ 
&< -1+1/3 = -2/3 < 0,
\end{align}
where we have used $y\leq 1$ for the first inequality and $\mu > -1/3$ for the second inequality. 

For now, let us assume that $|2\gC (k)-1| \leq \min\{1, [1-P(k)]/P(k)\}$, which, in terms of $x$ and $y$ can be rewritten as $2x \leq \min\{1, y/(1-y)\}$. The strong inequality version of this hypothesis becomes necessary when $\mu=0$ (Case 2a, in the following), however we will invoke the stronger hypothesis only when we deal with that particular case. At this point, in order to continue our argument, it becomes necessary to distinguish between the following cases:

\emph{Case 1:} Assume that $y\in (1/2, 1]$. It follows that $y/(1-y)>1$ and therefore
\begin{equation}
2x \leq \min\left\{1, \frac{y}{1-y}\right\} = 1.
\end{equation}
This allows us to bound $B_G^{(2)}(k)$ from below, as follows:
\begin{align}
B_G^{(2)}(k) &= 2x [2y-(\mu+3)]+4\mu y+2-2\mu \\
&\geq 2y-(\mu+3)+4\mu y+2-2\mu \\
&= (2+4\mu)y-3\mu-1 \\
&> (2+4\mu)(1/2)-3\mu -1 \\
&= 1+2\mu -3\mu -1 = -\mu \geq 0,
\end{align}
which, in turn, implies that $B_G^{(2)}(k)>0$, since at least one inequality is a strict inequality. For the first inequality we have used $2x \leq 1$ and $2y-(\mu+3) < 0$. For the second strict inequality, we have used the hypothesis $y>1/2$ and also the observation that $\mu > -1/3$ implies that $2+4\mu > 2+4(-1/3) = 2/3 > 0$. The last inequality is the proposition hypothesis $\mu\leq 0$. 

\noindent
\emph{Case 2:} Assume that $y\in [0, 1/2]$. It follows that $y/(1-y) \leq 1$ and therefore Eq. (1) reduces to 
\begin{equation}
2x \leq \min\left\{1, \frac{y}{1-y}\right\} = \frac{y}{1-y}.
\end{equation}
This allows us to bound $B_G^{(2)}(k)$ from below, as shown in the following:
\begin{align}
B_G^{(2)}(k) &= 2x [2y-(\mu+3)]+4\mu y+2-2\mu \\
&\geq \frac{y[2y-(\mu+3)]}{1-y}+4\mu y + 2-2\mu \label{eq:BG2-kek} \\
&= \frac{(2-4\mu)y^2+(5\mu-5)y+(2-2\mu)}{1-y} \\
&= \frac{\phi (y,\mu)}{1-y},
\end{align}
with $\phi (y,\mu)$ given by
\begin{equation}
\phi (y,\mu) = (2-4\mu)y^2+(5\mu-5)y+(2-2\mu).
\end{equation}
The inequality in Eq.~\eqref{eq:BG2-kek} is justified by $2x\leq y/(1-y)$ and $2y-(\mu+3)<0$. Since $1-y>0$, it is sufficient to establish that $\phi (y,\mu)$ is positive for $y\in [0, 1/2]$ and $\mu\in (-1/3, 0]$. We observe that the discriminant of $\phi (y,\mu)$ with respect to $y$ is given by
\begin{align}
\gD (\mu) &= (5\mu -5)^2-4(2-4\mu)(2-2\mu) \\
&= (1-\mu)(7\mu +9)>0,
\end{align}
because $1-\mu >1>0$ and $7\mu +9 > 7(-1/3)+9 > 0$. This means that $\phi (y,\mu)$ has two zeroes $y_1$ and $y_2$ that can be calculated via the quadratic formula. We will now claim that $y_1 > y_2 \geq 1/2$. The corresponding necessary and sufficient condition, given the a priori existence of the two zeroes $y_1, y_2$, is that $\phi (1/2, \mu)$ and the coefficient of $y^2$ must have the same sign and the point $1/2$ must be on the left side of the vertex of $\phi (y,\mu)$ with respect to $y$. For the first condition, we note that
\begin{align}
(2-4\mu)\phi (1/2, \mu) &= (2-4\mu)[(2-4\mu)(1/2)^2+(5\mu-5)(1/2)+(2-2\mu)] \\
&= (2-4\mu)(-\mu/2) 
= -\mu (1-2\mu).
\end{align}
Now, let us consider separately the following subcases:

\noindent
\emph{Case 2a:} Assume that $\mu <0$. Then, it immediately follows that $(2-4\mu)\phi (1/2, \mu)>0$, since $\mu <0$ and $1-2\mu >0$, consequently $\phi (1/2, \mu)$ and the coefficient $(2-4\mu)$ of $y^2$ have the same sign. To show that $1/2$ appears to the left of the vertex of $\phi (y,\mu)$ with respect to $y$, we note that
\begin{equation}
\frac{1}{2}+\frac{5\mu-5}{2(2-4\mu)} = \frac{-3+\mu}{2(2-4\mu)} < 0,
\end{equation}
where the inequality is justified by $-3+\mu < -3 <0$ and $2-4\mu >2>0$. It follows that in this case we have $y_1 > y_2 > 1/2$ and that implies that for all $y\in [0, 1/2]$, we have $\phi (y,\mu)>0$. We conclude that
\begin{equation}
B_G^{(2)}(k) \geq \frac{\phi (y,\mu)}{1-y} > 0,
\end{equation}
and therefore $B_G^{(2)}(k) > 0$.

\noindent
\emph{Case 2b:} Assume that $\mu =0$. Then it follows that $\phi (y,0) = 2y^2-5y+2 = (y-2)(2y-1)$ which satisfies $\phi (y,\mu) >0$ for all $y\in [0, 1/2)$. However, we see that for $y=1/2$, we have $\phi (1/2, 0) = 0$. So, generally, we can claim $\phi (y,\mu)\geq 0$ for all $y\in [0, 1/2]$. Using the more powerful assumption $2x<y/(1-y)$, we can argue that
\begin{equation}
B_G^{(2)}(k) > \frac{\phi (y,\mu)}{1-y} \geq 0,
\end{equation}
and therefore $B_G^{(2)}(k) > 0$. 

We conclude that in all of the above cases, under the stated assumptions, we have $B_G^{(2)}(k) > 0$. 
\end{proof}

\begin{proposition}
There exists a $\gc(k)<1/2$ such that 
\begin{equation}
2\gC (k)-1 > \gc (k) \implies B_G^{(2)}(k)<0,
\end{equation}
if and only if $P(k)$ satisfies the inequality
\begin{equation}
1-P(k)<(1+3\mu)/(2+4\mu).
\end{equation}
\label{prop:sign-of-BG-two-prop-three}
\end{proposition}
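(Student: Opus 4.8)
The plan is to hold $\mu$ and $P(k)$ fixed and to study $B_G^{(2)}(k)$ as a function of $\gC(k)$ alone. In the $(x,y)$ coordinates of Eq.~\eqref{eq:sign-of-BG-two-eq-one-xy}, with $\gC(k)=1/2+x$ and $P(k)=1-y$, we have $B_G^{(2)}(k) = x\,(4y-2\mu-6) + (4\mu y - 2\mu + 2)$, so $B_G^{(2)}(k)$ is \emph{affine} in $x$, equivalently in $\gC(k)$. First I would record two sign facts that hold uniformly over $y\in[0,1]$ and $\mu\in[-1/3,0]$: the slope obeys $4y-2\mu-6 \le 4 + 2/3 - 6 = -4/3 < 0$, and the value at $x=0$ (i.e.\ at $\gC(k)=1/2$) is $4\mu y - 2\mu + 2 = 2 + 2\mu(2y-1) > 0$, since $|2\mu(2y-1)| \le 2|\mu| \le 2/3$. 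Thus $B_G^{(2)}(k)$ is strictly decreasing in $\gC(k)$ and strictly positive at the equipartition value $\gC(k)=1/2$.

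These two facts force $B_G^{(2)}(k)$ to have a single zero at some $\gC^\ast(k) > 1/2$, with $B_G^{(2)}(k) < 0$ holding exactly when $\gC(k) > \gC^\ast(k)$. Solving the linear equation $B_G^{(2)}(k)=0$ gives
\[
\gC^\ast(k) = \frac{1}{2} + \frac{1 + \mu - 2\mu P(k)}{1 + \mu + 2P(k)},
\]
where the denominator equals $-\frac{1}{2}\,(4y - 2\mu - 6) > 0$ by the slope bound. Since $\gC(k)$ ranges over $[0,1)$ (the endpoint $\gC(k)=1$ being the trivial case $G_2(k)=0$), a threshold $\gc(k)$ of the kind asserted exists if and only if $\gC^\ast(k)$ itself lies below $1$: when $\gC^\ast(k) < 1$ one takes $\gc(k) = 2\gC^\ast(k)-1 \in (0,1)$, and the implication $2\gC(k)-1 > \gc(k) \Rightarrow B_G^{(2)}(k)<0$ is immediate from strict monotonicity; when $\gC^\ast(k) \ge 1$, no admissible $\gC(k)$ can exceed $\gC^\ast(k)$, so (again by strict monotonicity and continuity, not merely by $B_G^{(2)}(k)$ being negative somewhere) no threshold works.

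It remains to rewrite the inequality $\gC^\ast(k) < 1$. Multiplying through by the positive denominator, $\gC^\ast(k) < 1$ is equivalent to $2\bigl(1 + \mu - 2\mu P(k)\bigr) < 1 + \mu + 2P(k)$, hence to $1 + \mu < 2P(k)\,(1 + 2\mu)$; and since $\mu \ge -1/3 > -1/2$ gives $1+2\mu>0$, this is $P(k) > (1+\mu)/\bigl(2(1+2\mu)\bigr)$, which rearranges to $1 - P(k) < (1+3\mu)/(2+4\mu)$ --- the claimed condition. As a sanity check, at $\mu=-1/3$ the right-hand side is $0$, so the condition $1-P(k)<0$ is impossible and no threshold exists, which is exactly Proposition~\ref{prop:sign-of-BG-two-prop-one}.

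The argument is a monotonicity-plus-linear-algebra computation, so there is no single genuinely hard step. The places that need care are (i) the two uniform sign checks --- that the $\gC(k)$-slope of $B_G^{(2)}(k)$ is strictly negative and that $B_G^{(2)}(k)$ is positive at $\gC(k)=1/2$ for \emph{every} admissible $\mu$ and $P(k)$, which is precisely what guarantees that $\gC^\ast(k)$ is well defined and exceeds $1/2$; and (ii) the ``only if'' bookkeeping, i.e.\ making sure that when $1-P(k) \ge (1+3\mu)/(2+4\mu)$ there really is \emph{no} admissible threshold, which uses strict monotonicity rather than merely a sign change.
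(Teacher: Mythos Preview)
Your proof is correct and follows essentially the same line as the paper's: both write $B_G^{(2)}(k)$ as an affine function of $x=\gC(k)-1/2$ with strictly negative slope, locate the unique root, and reduce the existence of an admissible threshold to the linear inequality $(2+4\mu)y<1+3\mu$, i.e.\ $1-P(k)<(1+3\mu)/(2+4\mu)$. Your additional check that $B_G^{(2)}(k)>0$ at $x=0$ (so that $\gC^\ast(k)>1/2$) and your explicit formula for $\gC^\ast(k)$ in terms of $P(k)$ are minor embellishments; the paper works directly in the $y$ variable and simply compares the root against $x=1/2$.
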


\begin{proof}
We begin by rewriting the coefficient $B_G^{(2)}(k)$ as 
\begin{align}
B_G^{(2)}(k) &= 4xy-(2\mu+6)c+4\mu y+2-2\mu \\
&= x(4y-2\mu-6)+4\mu y +2-2\mu,
\end{align}
and noting that
\begin{equation}
4y-2\mu-6 \leq 4-2\mu-6 = -2(\mu+1)<0,
\end{equation}
where,  in the first inequality we have used $y\leq 1$ and in the second inequality we have used $\mu+1 \geq 2/3>0$. Solving for $x$, it follows that 
\begin{equation}
B_G^{(2)}(k)<0 \ifonlyif x>\frac{2\mu-2-4\mu y}{4y-2\mu-6}.
\end{equation}
Since $x$ is restricted in the interval $x\in [-1/2, 1/2]$, we can have $B_G^{(2)}(k)<0$ for some values of $x$,  if and only if 
\begin{align}
\frac{2\mu-2-4\mu y}{4y-2\mu-6} &< \frac{1}{2} \ifonlyif 2\mu-2-4\mu y > 2y-\mu-3 \\
&\ifonlyif (2+4\mu)y < 1+3\mu \\
&\ifonlyif y < \frac{1+3\mu}{2+4\mu} \\
&\ifonlyif 1-P(k) < \frac{1+3\mu}{2+4\mu}.
\end{align}
Here, the first equivalence is justified by the inequality $4y-2\mu-6< 0$ and the third  equivalence is justified by noting that $2+4\mu\geq 2+4(-1/3)>0$. This concludes the proof.
\end{proof}

Note that for the case of extrapolated Ekman dissipation (i.e. $\mu=0$), the condition on $P(k)$ in Proposition~\ref{prop:sign-of-BG-two-prop-three} reduces to $1-P(k)<1/2$, or equivalently $P(k)>1/2$, corresponding to an energy distribution that is at least half baroclinic. This condition becomes more restrictive as $\mu$  approaches $-1/3$, and for $\mu = -1/3$  we get $P(k)>1$, which is impossible, consistently with Proposition~\ref{prop:sign-of-BG-two-prop-one} that shows that $B_G^{(2)}(k)$ is unconditionally positive for the case $\mu = -1/3$.

\section{The sign of the coefficient $B_G^{(1)}(k)$}
\label{sec:sign-of-BG-one}

In this section, we give the proof for the following proposition, establishing a sufficient condition for having a positive coefficient $B_G^{(1)}(k)$.
\begin{proposition}
$\gC(k) < (2+\mu)/2 \implies B_G^{(1)}(k) > 0$.
\end{proposition}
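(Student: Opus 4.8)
The plan is to reduce the claimed implication to a single elementary estimate by eliminating the dependence on $P(k)$. Recall from Section~\ref{sec:potential-enstrophy-dissipation-rate} that
\begin{equation}
B_G^{(1)}(k) = 4[1-\gC(k)] + 2\mu[1-2P(k)],
\end{equation}
and that throughout we have $\mu\in [-1/3, 0]$ and $P(k)\in [0,1]$, so that $1-2P(k)\in [-1, 1]$.

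First I would bound the $\mu$-dependent term from below. Since $\mu\leq 0$, the quantity $2\mu[1-2P(k)]$ is a non-increasing function of $1-2P(k)$; hence it is minimized when $1-2P(k)$ is largest, that is, when $P(k)=0$. This gives the uniform lower bound $2\mu[1-2P(k)]\geq 2\mu$, valid for every wavenumber $k$ and every admissible value of $P(k)$.

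I would then substitute this bound into the expression for $B_G^{(1)}(k)$ to obtain
\begin{equation}
B_G^{(1)}(k) \geq 4[1-\gC(k)] + 2\mu = 4 + 2\mu - 4\gC(k).
\end{equation}
The hypothesis $\gC(k) < (2+\mu)/2$ is equivalent to $4\gC(k) < 4 + 2\mu$, so the right-hand side above is strictly positive, and therefore $B_G^{(1)}(k) > 0$, which is the assertion.

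There is essentially no obstacle here: the only step requiring a moment of care is the sign bookkeeping in the monotonicity argument, namely that multiplying through by the non-positive number $2\mu$ reverses the inequality, so that the least favourable case for $B_G^{(1)}(k)$ is $P(k)=0$ rather than $P(k)=1$. For orientation one may note that the threshold $(2+\mu)/2$ specializes to $5/6$ at $\mu=-1/3$, recovering the sufficient condition $\gC(k)<5/6$ quoted in Section~\ref{sec:potential-enstrophy-dissipation-rate}, and to $1$ at $\mu=0$, in which case the hypothesis is automatically satisfied since $\gC(k)\leq 1$, consistently with the standard Ekman term giving $B_G^{(1)}(k)>0$ unconditionally.
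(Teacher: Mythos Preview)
Your proof is correct and is essentially the same elementary bounding argument as the paper's: you first eliminate the $P(k)$-dependence via $2\mu[1-2P(k)]\geq 2\mu$ and then apply the hypothesis on $\gC(k)$, whereas the paper applies the hypothesis on $\gC(k)$ first and then observes that the remainder equals $-2\mu P(k)\geq 0$. One minor quibble in your closing remark: at $\mu=0$ the hypothesis becomes $\gC(k)<1$, which is not quite automatic from $\gC(k)\leq 1$; the boundary case $\gC(k)=1$ is excluded, as the paper also notes.
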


\begin{proof}
Let us assume that $\gC(k) < (2+\mu)/2$. We also note that $0 \leq P(k) \leq 1$ and $\mu\in [-1/3, 0]$. It follows that
\begin{align}
B_G^{(1)}(k) &= 2[1-\gC (k)]+\mu [1-2P (k)] \\
&> 2[1-(2+\mu)/2]+\mu [1-2P(k)] \\
&= -\mu+\mu [1-2P(k)] \\
&= -2\mu P(k) \geq 0.
\end{align}
The first inequality is justified by the hypothesis $\gC (k) < (2+\mu)/2$. The second inequality is a consequence of $\mu\leq 0$ and $P(k)\geq 0$. We conclude that $B_G^{(1)}(k)>0$. 
\end{proof}
Note that for the case of standard Ekman dissipation (i.e. $\mu=0$), the proposition reduces to
\begin{equation}
\gC (k) < 1 \implies B_G^{(1)}(k) > 0,
\end{equation}
which is almost unconditional. For the case of extrapolated Ekman dissipation, with the surface layer placed at ground level (i.e. $\mu=-1/3$), the proposition reduces to
\begin{equation}
\gC (k) < 5/6 \implies B_G^{(1)}(k) > 0.
\end{equation}
It is easy then, to argue that the assumption $\gC(k) < 5/6$ is strong enough to ensure that $B_G^{(1)}(k)>0$ for all $\mu\in [-1/3, 0]$, as we have claimed in the main text, noting that the condition can be weakened when $\mu\in (-1/3, 0]$.

\section{Justification of geometry shown in Fig.~\ref{fig:BE-two}}
\label{app:fig-two-geometry}

In Fig.~\ref{fig:BE-two}, a negative region emerges because the graph of the curve $B_E^{(2)}(k)=0$ is situated below the graph defined by the equation $2x = \min\{1, y/(1-y)\}$. In this section, we prove this claim in detail. As was explained in Section~\ref{sec:energy-dissipation-rate-spectrum}, we have:
\begin{equation}
B_E^{(2)}(k) = 0 \ifonlyif x = \frac{(\mu +1)y}{2(1-y)}.
\end{equation}
It is obvious that this curve passes through the origin $(x, y) = (0, 0)$. For $x = 1/2$, solving for $y$ gives $y = 1/(\mu +2)$, so the curve passes also  through the point $(x, y) = (1/2, 1/(\mu +2))$. These are the two points where the two curves intersect, and since $1/(\mu +2)>1/2$, the second intersection point occurs after the ``pointy box'' curve becomes horizontal. The intersection points also indicate that $x$ increases when $y$ increases along the curve $x = [(\mu+1)y]/[2(1-y)]$, and since $x$ is a homographic function of $y$, we expect that $x$ increases as $y$ increases for all $y\in [0, 1/(\mu +2)]$. Since $x$ is a continuous function of $y$ along the curve $B_E^{(2)}(k)=0$ for all $y\in [0, 1/(\mu+2)]$, it is sufficient to show that one interior point of the curve $B_E^{(2)}(k)=0$, corresponding to $y\in (0, 1/(\mu+2))$, is below the corresponding point on the ``pointy box'' curve given by the equation $2x = \min\{1, y/(1-y)\}$. As a matter of fact, it is fairly simple to establish this for all $y\in (0, 1/2)$, by considering the vertical distance $\gD (y)$ between the two curves, as a function of $y$:
\begin{align}
\gD (y) &= \frac{(\mu+1)y}{2(1-y)} -\frac{1}{2}\min\left\{1, \frac{y}{1-y}\right\} \\
&= \frac{(\mu+1)y}{2(1-y)} - \frac{y}{2(1-y)} \\
&= \frac{\mu y}{2(1-y)}.
\end{align}
For $\mu\in [-1/3, 0)$ it follows immediately that $\gD (y) <0$ for all $y\in (0, 1/2)$ since $y>0$ and $1-y>0$. For $\mu =0$, we have instead $\gD (y)=0$, meaning that the two curves will coincide, eliminating the negative region. 

\bibliographystyle{elsarticle-num}
\bibliography{references}

\end{document}